\documentclass[journal,comsoc]{IEEEtran}
\usepackage{amsfonts,amsmath,amssymb,amsthm}
\usepackage{multirow,graphicx,cite,color}

\newtheorem{Theorem}{Theorem}
\newtheorem{Remark}{Remark}
\newtheorem{Proposition}{Proposition}

\newtheorem{Application}{Application}

%

\normalsize

\ifCLASSINFOpdf
\else
\fi

\hyphenation{op-tical net-works semi-conduc-tor}

\begin{document}
%
\title{Towards Practical Private Information Retrieval from MDS Array Codes}
%
%
%

\author{Jie Li,~\IEEEmembership{Member,~IEEE,}
        David Karpuk,
        and~Camilla Hollanti,~\IEEEmembership{Member,~IEEE}
\thanks{J. Li is with the Department of Mathematics and Systems Analysis,
                    Aalto University, FI-00076 Aalto,  Finland. He is also an adjunct researcher at  Hubei Key Laboratory of Applied Mathematics, Faculty of Mathematics and Statistics, Hubei University,
Wuhan 430062, China (e-mail: jie.0.li@aalto.fi; jieli873@gmail.com).}
\thanks{D. Karpuk was previously with the Departamento de Matem\'aticas, Universidad de los Andes, Bogot\'a, Colombia. He is currently with F-Secure Corporation, Helsinki, Finland (email: davekarpuk@gmail.com).}
\thanks{C. Hollanti is with the Department of Mathematics and Systems Analysis,
                    Aalto University, FI-00076 Aalto,  Finland  (e-mail: camilla.hollanti@aalto.fi).}
\thanks{The work of J. Li was supported in part by the National Science Foundation of China under Grant No. 61801176. The work of C. Hollanti was supported by the Academy of Finland, under Grants No. 303819 and 318937, by the Finnish Academy of Science and Letters, and by the Technical University of Munich, Institute for Advanced Study, funded by the German Excellence Initiative and the EU 7th Framework Programme under Grant Agreement No. 291763, via a Hans Fischer Fellowship.}}

\maketitle

\begin{abstract}
Private information retrieval (PIR)  is the problem of privately retrieving one out of $M$ original files from $N$ severs, \textit{i.e.}, each individual server gains no information on the identity of the file that the user is requesting. Usually, the $M$ files are replicated or encoded by a maximum distance separable (MDS) code and then stored across the $N$ servers. Compared to mere replication,  MDS-coded servers can significantly reduce the storage overhead. Particularly, PIR from minimum storage regenerating (MSR)  coded servers can simultaneously  reduce the repair bandwidth when repairing failed servers. Existing PIR protocols from MSR-coded servers either require large sub-packetization levels or are not capacity-achieving. In this paper, a PIR protocol from MDS array codes is proposed, subsuming PIR from MSR-coded servers as a special case. Particularly, only the case of non-colluding, honest-but-curious servers is considered. The  retrieval rate of the new PIR protocol achieves the capacity of PIR from MDS-/MSR-coded servers. By choosing different MDS array codes, the new PIR protocol can have varying advantages when compared with existing protocols, \textit{e.g.}, 1) small sub-packetization,  2) (near-)optimal repair bandwidth, 3)
implementable over the binary field $\mathbf{F}_2$.
\end{abstract}

\begin{IEEEkeywords}
Capacity, MDS array codes,  private information retrieval (PIR),  repair bandwidth,  sub-packetization.
\end{IEEEkeywords}

%
\IEEEpeerreviewmaketitle

\section{Introduction}
%
%
%
%
\IEEEPARstart{I}{n} private information retrieval, a user wishes to retrieve a file from a database without disclosing the identity of the desired file. This PIR problem was first introduced by Chor \textit{et al.} in 1995 \cite{Chor1995}, and received a lot of attention since then \cite{Sun_capacity_replicated,Sun_replicated_downloadcost,Tian2019,Banawan_MDS_capacity,XU_Zhang_SciCN,Razane-IT18,Freij-SIAM,freij2018t,kumar2017private,Zhu2019,Zhou2019,Dorkson_PM_PIR,Julien_PM_PIR,kumar2019achieving,fazeli2015codes,blackburn2017pir,sun2019breaking}. In the classical PIR model, $N$ servers  each   store  a copy of all the $M$ files. The user sends a query to each server who then responds by sending a response to the user. The protocol should be designed in such a way that the user is then  able to decode the desired file from the responses received without revealing the identity of the desired file to any individual server. This is referred to as  PIR from replicated servers.

One of the most important metrics to evaluate the performance of PIR protocols is the \textit{retrieval rate}, which    is defined as the number of bits that the user can privately retrieve per bit of download
data.
The maximum value of the retrieval rate of a given PIR setting is termed  \textit{PIR capacity}, which was recently characterized by Sun and Jafar \cite{Sun_capacity_replicated} for the case of replicated servers   as
\begin{eqnarray}\label{Eqn_capacity_repli}
C_{replicated}=\left(1+\frac{1}{N}+\frac{1}{N^2}+\ldots+\frac{1}{N^{M-1}}\right)^{-1}.
\end{eqnarray}
In \cite{Sun_capacity_replicated}, a capacity-achieving PIR protocol  was also presented, which requires that the length of each file (also known as sub-packetization in the literature) should be $N^M$, and was reduced to  $N^{M-1}$ later by the same authors in \cite{Sun_replicated_downloadcost}, where it is also proved that $N^{M-1}$  is the optimal   sub-packetization under the assumption that the total  length of the responses  from all the servers (\textit{i.e.,} download cost) is symmetric in different random realizations of queries, \textit{i.e.,} the download cost is identical over all the realizations.
In practice, large sub-packetization is not preferred since   the complexity in practical implementations would be increased.
Recently, Tian \textit{et al.} \cite{Tian2019} proposed a novel PIR protocol for replicated servers with the sub-packetization being $N-1$, while the retrieval rate achieves the capacity in \eqref{Eqn_capacity_repli}. Unlike existing ones in \cite{Sun_capacity_replicated,Sun_replicated_downloadcost}, the new PIR protocol in \cite{Tian2019} allows  the total length of the responses  from the servers   to be asymmetric in different realizations of queries, which is the key to reducing the  sub-packetization.

However, for PIR protocols from replicated servers, a rather excessive storage overhead is required. This concern motivated the study of PIR from $(N, K)$ MDS-coded servers, where each server  stores  $\frac{1}{K}$ of each file.  The capacity of the PIR from $(N, K)$ MDS-coded servers was characterized by Banawan and Ulukus \cite{Banawan_MDS_capacity} as
    \begin{eqnarray}\label{Eqn_capacity_MDS}
    C_{MDS}=\left(1+\frac{K}{N}+\frac{K^2}{N^2}+\ldots+\frac{K^{M-1}}{N^{M-1}}\right)^{-1},
    \end{eqnarray}
where a capacity-achieving protocol was also proposed with the sub-packetization being $KN^M$. Recently,  the sub-packetization of the capacity-achieving PIR from  MDS-coded servers was reduced to $K\left(\frac{N}{\gcd(N, K)}\right)^{M-1}$ by Xu and Zhang in \cite{XU_Zhang_SciCN}, which is also optimal under the assumption that the   length of the responses  is a constant in different realizations of queries. In \cite{Razane-IT18},   Tajeddine and Rouayheb proposed a  PIR protocol from $(N, K)$ MDS-coded servers, where the sub-packetization can be as small as  $\frac{K(N-K)}{\gcd(N, K)}$, with the retrieval rate being $1-\frac{K}{N}$, which is independent of the file number $M$ and  asymptotically achieves the capacity in \eqref{Eqn_capacity_MDS}.
 In \cite{Freij-SIAM}, Freij-Hollanti \textit{et al.} presented a general \emph{star product} PIR protocol for MDS-coded servers with collusion including the work in  \cite{Razane-IT18} as a special case, and also allowing for a low sub-packetization level. Very recently, Zhu \textit{et al.} \cite{Zhu2019} and Zhou \textit{et al.} \cite{Zhou2019} independently discovered  PIR protocols from $(N, K)$ MDS-coded servers with the  sub-packetization being $\frac{K(N-K)}{\gcd(N, K)}$, while the   retrieval rate achieves the capacity in \eqref{Eqn_capacity_MDS}. The key ingredient of these protocols is similar to that in \cite{Tian2019}, \textit{i.e.},  the download cost is asymmetric in different realizations of queries.

In practical systems, each individual server may also suffer from failures \cite{GFS}. In such a scenario, the failed server could be repaired by  introducing a replacement node and downloading   data from some other $D$ surviving servers independently, where $D$ is referred to as the \textit{repair degree} and the total amount of data downloaded is referred to as the \textit{repair bandwidth} in the literature. However,  for most existing PIR protocols from MDS-coded servers, a repair bandwidth equal to the size of all the files is needed, which is rather inefficient. To address this issue, PIR protocols from regenerating codes, which can efficiently repair a single server failure while still maintaining a high retrieval rate were considered recently in  \cite{Dorkson_PM_PIR,Julien_PM_PIR}. \textit{Regenerating codes} are a kind of \emph{vector codes} or \emph{array codes}   that can reduce the repair bandwidth, and were originally  introduced by Dimakis \textit{et al.} in  \cite{Dimakis}, also  characterizing the tradeoff between the storage and repair bandwidth in distributed storage systems. The two most interesting classes of regenerating codes are the \emph{Minimum Storage Regenerating (MSR) codes} and the \emph{Minimum Bandwidth Regenerating (MBR) codes}, which received a lot of attention \cite{PM,hadamard,Barg1,Barg2,Sasidharan-Kumar2,YiLiu,Hadamard_strategy,transform-IT,transform-ISIT,Bitran,Goparaju,MDR2Wang,Sohei_ISIT2018,Soheil_MBR,Tian433,Hashtag,Basic_code,han2015update}.
In regenerating codes, the data stored in each node can be viewed as a vector of length $\alpha$, which is referred to as the \textit{node capacity} in the literature.
The node capacity  $\alpha$ and the repair bandwidth $\gamma$ of  $(N, K)$ MBR codes and $(N, K)$ MSR codes with repair degree $D$ were respectively characterized as
\begin{eqnarray*}
\nonumber \alpha_{\mathbf{MBR}}=\frac{2\mathcal{B}D}{2KD-K^2+K},\ \gamma_{\mathbf{MBR}}=\frac{2\mathcal{B}D}{2KD-K^2+K},
\end{eqnarray*}
\begin{eqnarray}\label{Eqn_MSR_parameters}
 \alpha_{\mathbf{MSR}}=\frac{\mathcal{B}}{K},\ \gamma_{\mathbf{MSR}}=\frac{\mathcal{B}D}{K(D-K+1)},
\end{eqnarray}
where $\mathcal{B}$ is the size of the original files stored across the system.

In \cite{Dorkson_PM_PIR}, a PIR protocol from the  $(N, K)$ product-matrix-MSR (PM-MSR) code in \cite{PM} was proposed, where each failed server can be repaired with theoretically minimum repair bandwidth  achieving \eqref{Eqn_MSR_parameters}.  However, the retrieval rate is $1-\frac{2K-2}{N}$, which is far from the capacity in \eqref{Eqn_capacity_MDS}. The retrieval rate of the PIR protocol from the  $(N, K)$  PM-MSR code was later improved to $1-\frac{4K-2}{3N-2K+4}$ by Lavauzelle \textit{et al.} in \cite{Julien_PM_PIR}, but still can not achieve the  capacity in \eqref{Eqn_capacity_MDS}. Nevertheless, it is worth noting that in \cite{Julien_PM_PIR}, a PIR protocol from $(N, K)$  PM-MBR code was also considered, with a high retrieval rate larger than $1-\frac{K}{N}$ but a slightly larger storage overhead than that of the MDS-coded servers. Very recently, Patra and Kashyap \cite{Patra_capacity_MSR}  showed that the capacity of a PIR protocol from MSR codes is the same as that from  MDS codes, and a capacity-achieving construction was proposed, however, with a large sub-packetization being $\alpha K N^M$, where $\alpha$ is the node capacity of the specific MSR code employed.

Another important metric in PIR is the field size, which directly  affects the system complexity. To the best of our knowledge, nearly all existing PIR protocols from MDS-coded servers are over non-binary fields. There do exist some (non-MDS) PIR protocols over the binary field in the literature, such as the ones in \cite{freij2018t}.

In this paper, we
follow the line of work in \cite{Banawan_MDS_capacity,Razane-IT18,Freij-SIAM}, where the encoding is within each file and hence the system is dynamic in that adding a file is easy and does not affect the rest of the system. This is in contrast to the work in \cite{fazeli2015codes,blackburn2017pir,sun2019breaking}, where the encoding is across different files.
Motivated by the works in \cite{Zhu2019,Zhou2019},  by taking into account  both the sub-packetization and the repair bandwidth, we propose a novel  PIR protocol from MDS array codes, which generalizes the works in \cite{Dorkson_PM_PIR,Julien_PM_PIR,Patra_capacity_MSR}. For simplicity, in this paper, only the case of non-colluding, honest-but-curious servers is considered.   By choosing different MDS array codes, the new PIR protocol has the following advantages A1, A2, and A3 (or A4):
\begin{enumerate}
    \item  [A1.] The  retrieval rate achieves the capacity in \eqref{Eqn_capacity_MDS}, which outperforms the works in \cite{Dorkson_PM_PIR,Julien_PM_PIR}.
    \item [A2.] A failed server can be repaired with a small (or optimal)  repair bandwidth, which outperforms the works in \cite{Banawan_MDS_capacity, Zhu2019,Zhou2019}.
         \item [A3.] The sub-packetization is relatively small, and outperforms that in \cite{Patra_capacity_MSR}.
    \item [A4.] Can be possibly implemented over $\mathbf{F}_2$, which is impossible for  all existing PIR protocols from MDS-coded servers.
\end{enumerate}

Except to the above advantages, the new PIR protocol has a drawback that the download cost is not a constant in  different realizations of queries. Nevertheless, in \cite{Tian2019}, the authors proposed three symmetrization techniques, applying which  on an asymmetric PIR protocol can produce  a new one that is  symmetric, but at the cost of increasing the sub-packetization, and the resultant PIR protocol would not have advantages compared with the capacity-achieving PIR protocols with a constant download size in terms of the sub-packetization. Overall, asymmetric download cost is a key to significantly reduce the sub-packetization of a capacity-achieving PIR protocol.

The rest of this paper is organized as follows. Section II introduces some basic preliminaries of MDS array codes and PIR models. Section III presents the new PIR protocol from MDS array codes, followed by the asserted properties. An illustrative example is also given. Section IV gives four specific applications of the  PIR protocol proposed in Section III, by choosing four specific MDS array codes, and  comparisons among the new PIR protocols and existing ones are also provided. Finally, in Section V we provide concluding remarks.


%

\section{Preliminaries}
Denote by $q$ a prime power and $\mathbf{F}_q$ the finite field with $q$ elements. For any two positive integers $x$ and $y$, denote by $[x, y)$ the set $\{x, x+1, \ldots, y-1\}$. Throughout this paper, we use superscripts to refer to files, subscripts to refer to servers, and parenthetical indices for (block) entries of a vector. For example,
\begin{itemize}
    \item $W^i$ denotes the $i$-th file,
     \item  $Q_j^{\theta}$ denotes the query sent to the $j$-th server while requesting file $\theta$,
     \item $A(t)$ denotes the $t$-th (block) entry of the vector $A$.
\end{itemize}

\subsection{MDS Array Codes}

Scalar MDS codes require significantly large bandwidth during a repair process, while some MDS array codes only require a smaller repair bandwidth. Compared with scalar MDS codes, the codewords of MDS array codes are in 2 dimensions, \textit{i.e.}, can be viewed as a matrix. Examples of MDS array codes are  MSR codes  and their variation-- $\epsilon$-MSR codes \cite{Rawat,Li_eMSR_ISIT}.


 MSR codes are a kind of MDS array codes, which were originally  introduced by Dimakis \textit{et al.} in order to optimally repair a failed node in distributed storage systems \cite{Dimakis}. 
However, a major concern of high-rate MSR codes is that the node capacity is significantly large, the smallest node capacity among all the known constructions is $(n-k)^{\frac{n}{n-k}}$ for general parameters $n$ and $k$ \cite{Barg2,Sasidharan-Kumar2,transform-IT}. To address this concern, a variation of MSR codes named  $\epsilon$-MSR codes \cite{Rawat} has been studied, which  have a significantly smaller node capacity than that of MSR codes but with near-optimal repair bandwidth. Besides $\epsilon$-MSR codes, there are also other kinds of MDS array codes that can  reduce the node capacity, such as the piggybacking design codes \cite{Rashmi_piggyback,Yuan_piggyback}, which are operating on multiple instances of scalar MDS codes.

\subsection{PIR models}\label{sec:PIR}

Suppose there are $M$ files over $\mathbf{F}_q$, denoted by $W^i$, $i\in [0, M)$, and are stored across an $(N, K, \alpha)$ MDS array-coded distributed storage system. These files  are   independent
and identically distributed with
\begin{eqnarray*}
  H(W^i)&=&L,~ i\in [0, M),\\
  H\left(W^0,~W^1,\ldots,W^{M-1}\right)&=&ML,
\end{eqnarray*}
where $H(*)$ denotes the entropy function with base $q$.

For each $i\in [0, M)$,  $W^i$ can be represented in the form of a  $B\times K\alpha$  matrix as
\begin{eqnarray}
 W^i=\left(\begin{array}{cccc}
W^i_{0,0} & W^i_{0,1} & \cdots & W^i_{0,K-1}\\
W^i_{1,0} & W^i_{1,1} & \cdots & W^i_{1,K-1}\\
\vdots & \vdots & \ddots & \vdots\\
W^i_{B-1,0} & W^i_{B-1,1} & \cdots & W^i_{B-1,K-1}\\
\end{array}\right),
\end{eqnarray}
for some positive integer   $B$,
where $W^i_{j,l}$ is a row vector of length $\alpha$ for $j\in [0, B)$ and $l\in [0, K)$, and $(W^i_{j,0},  W^i_{j,1},  \ldots,  W^i_{j,K-1})$ is called the $j$-th sub-stripe of $W^i$. Therefore, $L=\alpha B K$.

{\bf Encoding process: }
Encode each sub-stripe  of each file  by  an $(N, K, \alpha)$ MDS array code\footnote{Unless otherwise stated, we only consider linear code in this paper.}, \textit{i.e.,}
\begin{eqnarray}\label{Eqn_encoding}
\left(Y^i_{j,0},  Y^i_{j,1}, \ldots,  Y^i_{j,N-1}\right)=\left(W^i_{j,0},  W^i_{j,1},  \ldots,  W^i_{j,K-1}\right)G
\end{eqnarray}
for $i\in [0, M),~j\in [0, B)$,
where $G$ is the generator matrix\footnote{Recently, there are several MDS array codes defined in the form of parity-check matrix \cite{Barg1, Barg2,YiLiu}, nevertheless, they can also be converted into the generator matrix form.} of the $(N, K, \alpha)$ MDS array code,  which is a $K \alpha\times N\alpha$ matrix, usually represented as a   $K  \times N $ block matrix with each block entry being an $\alpha \times \alpha$ matrix.


Arrange the $M$ files   into a matrix as
\begin{eqnarray}
W=\left(\begin{array}{c}
W^0 \\
W^1 \\
\vdots\\
W^{M-1}\\
\end{array}\right),
\end{eqnarray}
then the whole encoding process can be expressed as
\begin{eqnarray}
Y=WG=\left(\begin{array}{c}
Y^0 \\
Y^1 \\
\vdots\\
Y^{M-1}\\
\end{array}\right)
\end{eqnarray}
where
\begin{eqnarray*}
Y^i=W^iG=\left(\begin{array}{cccc}
Y^i_{0,0} & Y^i_{0,1} & \cdots & Y^i_{0,N-1}\\
Y^i_{1,0} & Y^i_{1,1} & \cdots & Y^i_{1,N-1}\\
\vdots & \vdots & \ddots & \vdots\\
Y^i_{B-1,0} & Y^i_{B-1,1} & \cdots & Y^i_{B-1,N-1}\\
\end{array}\right).
\end{eqnarray*}

Then the data stored in  server $i$ is
\begin{eqnarray*}
Y[:, i]&=&\left((Y^0_{0,i})^T, (Y^0_{1,i})^T, \ldots, (Y^0_{B-1,i})^T, \ldots, (Y^{M-1}_{0,i})^T,\right.\\ &&\left. (Y^{M-1}_{1,i})^T, \ldots, (Y^{M-1}_{B-1,i})^T\right)^T,~i\in [0, N),
\end{eqnarray*}
where $Y[:, i]$ denotes the $i$-th block column of $Y$.

Suppose that the user wishes to retrieve file $\theta$ privately through a PIR protocol, where $\theta$ is uniformly distributed on   $[0, M)$, then the protocol consists of the following phases:
\begin{itemize}
    \item [(i)] {\bf Query Phase:} The user generates $N$ queries $Q^{(0,\theta)}, \ldots, Q^{(N-1,\theta)}$ according to  some distribution over a  certain probability space and sends $Q^{(i,\theta)}$ to the   server $i$ for $i\in [0, N)$. Indeed,  the queries are generated with no realizations of the files, \textit{i.e.,}
    \begin{equation*}
        I\left(W^0, \ldots, W^{M-1}; Q^{(0,\theta)}, \ldots, Q^{(N-1,\theta)}\right)=0,
    \end{equation*}
where $I(A; B)$ denotes the mutual information between $A$ and $B$.

\item [(ii)] {\bf Response Phase:} Upon receiving the query, for all $i\in [0,N)$, server $i$ returns the response $A^{(i,\theta)}$ to the user. The response is a deterministic function of $Q^{(i,\theta)}$ and the data $Y[:, i]$ stored at server $i$, \textit{i.e.,}
  \begin{equation*}
H\left(A^{(i,\theta)}|Q^{(i,\theta)}, Y[:, i]\right)=0,
  \end{equation*}
where $H\left(X|Y\right)$ denotes the entropy of $X$ conditioning on $Y$.
\end{itemize}

The PIR protocol should be designed  to guarantee
\begin{itemize}
    \item {\bf Correctness:} With the responses and queries, the user can get the desired file $W^{\theta}$, \textit{i.e.,}
    \begin{equation*}
        H\left(W^{\theta}|A^{(0,\theta)},\ldots,A^{(N-1,\theta)},Q^{(0,\theta)},\ldots,Q^{(N-1,\theta)}\right)=0.
    \end{equation*}

    \item {\bf Privacy:} Each server   should learn nothing about which file the user requested, \textit{i.e.,}
    \begin{equation*}
      I\left(\theta;Q^{(i,\theta)},Y[:, i]\right)=0,\, i\in [0, N),
    \end{equation*}
which is equivalent to
    \begin{equation*}
     \mathrm{Prob}\left(Q^{(i,\theta_0)}=Q^*\right)=\mathrm{Prob}\left(Q^{(i,\theta_1)}=Q^*\right)
    \end{equation*}
for every $i\in [0, N)$, $\theta_0, \theta_1\in [0, M)$, and $Q^*$ in the probability space.
\end{itemize}

Formally, the PIR retrieval rate   is defined as
\begin{equation}\label{Eqn_rate}
  R=\frac{L}{D_c},
\end{equation}
where   $D_c$ denotes the expected value   of the amount of data
 downloaded by the user from all the
servers \cite{Sun_capacity_replicated}.

In general, it is preferred that a PIR protocol has the following properties:
\begin{itemize}
  \item High retrieval rate,
  \item Small sub-packetization,
  \item Small finite fields,
  \item Small repair bandwidth when repairing  a failed server.
\end{itemize}

\section{A New PIR Protocol From MDS Array Codes}

In this section, we propose a new PIR protocol from MDS array codes,   the technique proposed in the following can be viewed as a generalization of those in \cite{Zhu2019} and \cite{Zhou2019}.

\subsection{A new PIR protocol from MDS array codes}\label{sec:PIR_cons}

Consider a  distributed storage system storing $M$ files across $N$ servers based on an $(N, K, \alpha)$ MDS array code, as depicted in  Section \ref{sec:PIR}.
Let
\begin{eqnarray}\label{Eqn_bs}
B=\frac{N-K}{\gcd(N, K)},~ S=\frac{K}{\gcd(N, K)}.
\end{eqnarray}
Denote by $\Omega$ the set of $M\times S$ matrices over $[0, B+S)$ with any two entries in any row being distinct, \textit{i.e.,}

\begin{eqnarray}
\nonumber\Omega&=&\{Q=(q_{i,j})_{i\in [0, M),j\in [0, S)}\in [0, B+S)^{M\times S}:\\\label{Eqn_query_space}&& q_{i,j} \ne q_{i,l},
 i\in [0, M),~j,l\in [0,S), j\ne l \}.
\end{eqnarray}

For convenience, define
\begin{eqnarray}\label{Eqn_dummy}
Y^i_{j, l}=\boldsymbol{0}_{\alpha}, \mbox{~for~} i\in [0, M),~  j\in [B, B+S),~ l\in [0, N),
\end{eqnarray}
where $\boldsymbol{0}_{\alpha}$ denotes the zero  row vector of length $\alpha$, and will be abbreviated as $\boldsymbol{0}$ in the sequel if its length is clear.

Now we are ready to propose the query phase and response phase of the new PIR protocol.

{\bf Query Phase:}
We assume that the user wishes to retrieve file $W^\theta$.
The user first randomly chooses a matrix $Q$ in $\Omega$. Next, the user generates the queries based on the randomly chosen matrix $Q$ as
\begin{eqnarray}\label{Eqn_Q_i}
\setlength{\arraycolsep}{0.6pt}\small
Q^{(i,\theta)}\hspace{-0.8mm}=\hspace{-2mm} 
       \left(\hspace{-1mm}\begin{array}{cccc}
q_{0,0} & q_{0,1} & \cdots & q_{0,S-1}\\
\vdots & \vdots & \ddots & \vdots\\
q_{\theta-1,0} & q_{\theta-1,1} & \cdots & q_{\theta-1,S-1}\\
(q_{\theta,0}+i)_{B+S} & (q_{\theta,1}+i)_{B+S} & \cdots & (q_{\theta,S-1}+i)_{B+S}\\
q_{\theta+1,0} & q_{\theta+1,1} & \cdots & q_{\theta+1,S-1}\\
\vdots & \vdots & \ddots & \vdots\\
q_{M-1,0} & q_{M-1,1} & \cdots & q_{M-1,S-1}
\end{array}\hspace{-0.8mm}\right),
\end{eqnarray}
where $i\in [0, N)$, $(*)_{B+S}$ denotes the modulo $B+S$ operation.
Then the query $Q^{(i,\theta)}$ is sent to server $i$ for $i\in [0, N)$.

{\bf Response Phase:} Upon receiving the query, server $i$ sends  the response in the form of a row vector of length  $\alpha\times S$  as

\begin{eqnarray}\label{Eqn_ans_j}
A^{(i,\theta)}=\left(A^{(i,\theta)}(0), A^{(i,\theta)}(1), \ldots,A^{(i,\theta)}(S-1)\right)
\end{eqnarray}
with
\begin{eqnarray}\label{Eqn_ans_j_iteration}
A^{(i,\theta)}(j)=\sum\limits_{l=0,l\ne \theta}^{M-1}Y^l_{q_{l,j},i}+Y^{\theta}_{(q_{\theta,j}+i)_{B+S},i},~j\in [0, S).
\end{eqnarray}

From \eqref{Eqn_Q_i} and \eqref{Eqn_ans_j_iteration}, the following proposition is obvious.
\begin{Proposition}\label{prop1}
If every entry of  $Q^{(i,\theta)}[:, j]$ (\textit{i.e.,} the $j$-th column of $Q^{(i,\theta)}$) is greater than or equal to $B$, then $A^{(i,\theta)}(j)=\boldsymbol{0}$.
\end{Proposition}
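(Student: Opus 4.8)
The plan is to read off directly from the construction that the sub-stripe indices appearing in $A^{(i,\theta)}(j)$ are exactly the entries of the $j$-th column of the query matrix $Q^{(i,\theta)}$, and then to invoke the dummy convention \eqref{Eqn_dummy} to kill every summand.

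First I would fix $i \in [0,N)$ and $j \in [0,S)$ and expand $A^{(i,\theta)}(j)$ using \eqref{Eqn_ans_j_iteration} as $\sum_{l=0,\,l\ne\theta}^{M-1} Y^l_{q_{l,j},i} + Y^{\theta}_{(q_{\theta,j}+i)_{B+S},\,i}$. Comparing this with \eqref{Eqn_Q_i}, the first (sub-stripe) subscripts of these $M$ summands are precisely the $M$ entries of the column $Q^{(i,\theta)}[:,j]$: for $l\ne\theta$ the entry in row $l$ is $q_{l,j}$, and the entry in row $\theta$ is $(q_{\theta,j}+i)_{B+S}$. Since $Q\in\Omega\subseteq[0,B+S)^{M\times S}$ and the reduction $(\cdot)_{B+S}$ also lands in $[0,B+S)$, every such subscript lies in $[0,B+S)$; the hypothesis of the proposition says each of them is moreover $\ge B$, hence lies in $[B,B+S)$.

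Then I would apply \eqref{Eqn_dummy}, which states that $Y^a_{b,c}=\boldsymbol{0}$ whenever $b\in[B,B+S)$, for any file index $a\in[0,M)$ and any server index $c\in[0,N)$. Each summand of $A^{(i,\theta)}(j)$ has its sub-stripe subscript in $[B,B+S)$ (by the previous step) and its server subscript equal to $i\in[0,N)$, so each summand equals $\boldsymbol{0}$; adding the $M$ zero vectors gives $A^{(i,\theta)}(j)=\boldsymbol{0}$. I do not expect a genuine obstacle here — the argument is pure bookkeeping — and the only point that warrants care is the index matching between \eqref{Eqn_Q_i} and \eqref{Eqn_ans_j_iteration}, together with checking that all sub-stripe indices stay within the range $[0,B+S)$ on which the dummy convention \eqref{Eqn_dummy} is declared, so that it genuinely applies to every term of the sum.
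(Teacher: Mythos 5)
Your argument is correct and is exactly the reasoning the paper relies on: the entries of $Q^{(i,\theta)}[:,j]$ are precisely the sub-stripe indices of the $M$ summands in \eqref{Eqn_ans_j_iteration}, and the dummy convention \eqref{Eqn_dummy} zeroes every summand once all of them lie in $[B,B+S)$; the paper simply declares this ``obvious'' from \eqref{Eqn_Q_i} and \eqref{Eqn_ans_j_iteration} without writing it out. Your careful index matching and range check fill in the same bookkeeping, so there is nothing to add.
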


According to Proposition \ref{prop1},   the user knows  which block entry in \eqref{Eqn_ans_j} is a zero vector from the queries in \eqref{Eqn_Q_i}. In practice, the user can just delete such  columns   from $Q^{(i,\theta)}$. Therefore,
the length of the response $A^{(i,\theta)}$ from server $i$ is
\begin{eqnarray}\label{Eqn_len}
l_{i}=\alpha\times |\{j:Q^{(i,\theta)}[:, j]\not\in [B, B+S)^M,j\in [0, S)\}|.
\end{eqnarray}

Similar to Fact 3 in \cite{Zhu2019}, we have the following result according to \eqref{Eqn_query_space} and \eqref{Eqn_Q_i}.

\begin{Proposition}\label{prop2}
For any given $j\in [0, S)$, $Q^{(i,\theta)}[:, j]$ is independent and uniformly distributed over $[0, B+S)^M$ for every $i\in [0, N)$ and $\theta\in [0, M)$.
\end{Proposition}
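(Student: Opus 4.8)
The plan is to exploit the product structure of the query space $\Omega$ across its $M$ rows, combine it with a symmetry argument within a single row, and finish with the observation that the modular shift by $i$ is a bijection.

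First I would note that the defining constraints of $\Omega$ in \eqref{Eqn_query_space} couple only entries lying in the same row, and impose no relation between entries of distinct rows. Hence, when $Q=(q_{i,j})$ is drawn uniformly from $\Omega$, its $M$ rows are mutually independent, and each row is uniformly distributed over the set $T$ of $S$-tuples of pairwise distinct elements of $[0,B+S)$. Fixing the column index $j$, this immediately yields that the entries $q_{0,j},q_{1,j},\dots,q_{M-1,j}$ are mutually independent.

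Second, I would determine the marginal law of a single coordinate of a uniform element of $T$. The set $T$ is invariant under permuting the positions of a tuple, so every coordinate of a uniform draw from $T$ has the same marginal; and $T$ is also invariant under relabelling the $B+S$ symbols, an action that is transitive on $[0,B+S)$, so that common marginal must be the uniform distribution on $[0,B+S)$. Equivalently, the number of tuples in $T$ whose $j$-th coordinate equals a prescribed value is $(B+S-1)(B+S-2)\cdots(B+S-S+1)$, independent of the value, giving probability $1/(B+S)$. Thus each $q_{l,j}$ is uniform on $[0,B+S)$, and by the previous paragraph the vector $(q_{0,j},\dots,q_{M-1,j})$ is uniform on $[0,B+S)^M$.

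Finally, passing from $Q$ to $Q^{(i,\theta)}$ via \eqref{Eqn_Q_i} only replaces the $\theta$-th entry of the $j$-th column, $q_{\theta,j}$, by $(q_{\theta,j}+i)_{B+S}$; since $x\mapsto (x+i)_{B+S}$ is a bijection of $[0,B+S)$, this entry stays uniform on $[0,B+S)$ and remains a deterministic function of $q_{\theta,j}$ alone, hence still independent of the other $M-1$ entries. Therefore $Q^{(i,\theta)}[:,j]$ is independent and uniformly distributed over $[0,B+S)^M$ for every $i\in[0,N)$ and $\theta\in[0,M)$. The only mildly delicate point is the single-coordinate marginal computation in the second step; the rest is routine bookkeeping about independence and bijections.
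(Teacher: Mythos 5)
Your proof is correct and matches the reasoning the paper intends: the paper gives no detailed argument for this proposition, merely invoking Fact 3 of Zhu \textit{et al.} together with the definitions of $\Omega$ and $Q^{(i,\theta)}$, and your write-up---the row-wise product structure of $\Omega$ giving independence across rows, the uniform single-coordinate marginal of a uniform distinct-entry $S$-tuple (by symmetry or the count $(B+S-1)\cdots(B+1)$), and the bijectivity of $x\mapsto(x+i)_{B+S}$ preserving uniformity and independence of the shifted $\theta$-th entry---is exactly the standard justification of that fact adapted to this setting. The one delicate step you flag (the marginal computation) is handled correctly, so there is no gap.
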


\subsection{Correctness and privacy of the new PIR protocol}

\begin{Theorem}
The   retrieval rate of the PIR protocol  proposed in Section \ref{sec:PIR_cons} meets the capacity in \eqref{Eqn_capacity_MDS}, with sub-packetization being $L=\alpha B K$, where $B$ is defined as in \eqref{Eqn_bs}.
\end{Theorem}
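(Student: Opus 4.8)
The plan is to verify, in turn, privacy, correctness, and the claimed retrieval rate, following and generalizing to array codes the approach of \cite{Zhu2019,Zhou2019}; the sub-packetization claim $L=\alpha BK$ is immediate since each $W^i$ is a $B\times K\alpha$ matrix over $\mathbf{F}_q$ whose $\alpha$-vector blocks number $BK$. Throughout put $g=\gcd(N,K)$, so that $S=K/g$, $B+S=N/g$, and $N=g(B+S)$, and use Propositions~\ref{prop1} and \ref{prop2} freely.

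For \emph{privacy}, I would show that, conditioned on $\theta$, the query $Q^{(i,\theta)}$ is uniform on $\Omega$ for each $i$. Indeed $Q^{(i,\theta)}$ arises from the uniform $Q\in\Omega$ by replacing its $\theta$-th row with that row translated coordinatewise by $i$ modulo $B+S$; coordinatewise translation modulo $B+S$ maps the length-$S$ vectors over $[0,B+S)$ with pairwise distinct entries bijectively onto themselves, so $Q\mapsto Q^{(i,\theta)}$ is a bijection of $\Omega$ and preserves uniformity. Since $Q$ is chosen with no realization of the files and $Y[:,i]$ depends on the files alone, the joint law of $\big(Q^{(i,\theta)},Y[:,i]\big)$ does not depend on $\theta$, i.e.\ $I\big(\theta;Q^{(i,\theta)},Y[:,i]\big)=0$.

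The core is \emph{correctness}, argued one query-column at a time and then combined. Fix $j\in[0,S)$, put $v_i=(q_{\theta,j}+i)_{B+S}$, and let $c^{(j)}=\sum_{l\ne\theta}\big(Y^l_{q_{l,j},0},\ldots,Y^l_{q_{l,j},N-1}\big)$, a codeword of the $(N,K,\alpha)$ MDS array code as a sum of $M-1$ such codewords, with $c^{(j)}_i$ its $i$-th $\alpha$-block. By \eqref{Eqn_ans_j_iteration}, $A^{(i,\theta)}(j)=c^{(j)}_i+Y^\theta_{v_i,i}$. As $i$ runs over the $N=g(B+S)$ values in $[0,N)$, $v_i$ attains each element of $[0,B+S)$ exactly $g$ times, so exactly $Sg=K$ servers have $v_i\in[B,B+S)$; for these $Y^\theta_{v_i,i}=\boldsymbol 0$ by \eqref{Eqn_dummy}, hence $A^{(i,\theta)}(j)=c^{(j)}_i$, i.e.\ the user learns $K$ blocks of $c^{(j)}$ at $K$ distinct coordinates. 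The MDS property then recovers $c^{(j)}$ entirely, and subtracting $c^{(j)}_i$ from $A^{(i,\theta)}(j)$ yields $Y^\theta_{v_i,i}$ for every server with $v_i\in[0,B)$. Varying $j$: for a fixed real sub-stripe $s\in[0,B)$, column $j$ supplies $Y^\theta_{s,i}$ exactly for the $g$ servers with $i\equiv s-q_{\theta,j}\pmod{B+S}$; since $q_{\theta,0},\ldots,q_{\theta,S-1}$ are pairwise distinct in $[0,B+S)$ by the definition of $\Omega$, the residues $s-q_{\theta,j}$ are distinct modulo $B+S$, so over $j\in[0,S)$ the user obtains $Y^\theta_{s,i}$ at $Sg=K$ distinct coordinates. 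A second application of the MDS property reconstructs the $s$-th sub-stripe $\big(W^\theta_{s,0},\ldots,W^\theta_{s,K-1}\big)$, and ranging $s$ over $[0,B)$ recovers $W^\theta$.

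Finally, for the \emph{rate}: by Proposition~\ref{prop1} the user skips downloading any column of $Q^{(i,\theta)}$ whose entries all lie in $[B,B+S)$, so \eqref{Eqn_len} gives expected download $D_c=\sum_{i=0}^{N-1}\alpha\sum_{j=0}^{S-1}\Pr\big(Q^{(i,\theta)}[:,j]\notin[B,B+S)^M\big)$; by Proposition~\ref{prop2} each probability equals $1-\big(S/(B+S)\big)^M=1-(K/N)^M$, whence $D_c=N\alpha S\big(1-(K/N)^M\big)$. Then $R=L/D_c=\tfrac{BK}{NS}\big(1-(K/N)^M\big)^{-1}=\big(1-\tfrac{K}{N}\big)\big(1-(\tfrac{K}{N})^M\big)^{-1}$, which is $C_{MDS}$ of \eqref{Eqn_capacity_MDS} once the geometric series there is summed. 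I expect the main obstacle to be the correctness step: one must identify the per-column interference codeword $c^{(j)}$ and carefully track, using $N=g(B+S)$ together with the within-row distinctness built into $\Omega$, that both the $K$ interference-free equations per column and the $K$ recovered desired blocks per sub-stripe sit at distinct server coordinates, so that MDS decoding applies in each place.
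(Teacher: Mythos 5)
Your proof is correct and follows essentially the same route as the paper's: per-column MDS decoding of the interference codeword (your $c^{(j)}$ is exactly the codeword in \eqref{Eqn_codeword}, your $K$ interference-free servers are the paper's set $Z_j$), then per-sub-stripe MDS recovery from $K$ desired blocks at distinct servers (the paper's $U_{t,j}$ and $U_t$), together with the identical expected-download computation from Propositions~\ref{prop1} and~\ref{prop2} yielding $R=\frac{1-K/N}{1-(K/N)^M}$. If anything, you make explicit two small points the paper leaves implicit, namely that translating the $\theta$-th row modulo $B+S$ is a bijection of $\Omega$ (privacy) and that the pairwise distinctness of $q_{\theta,0},\ldots,q_{\theta,S-1}$ is what makes the per-column server sets disjoint, so that exactly $K$ distinct coordinates of each sub-stripe codeword are obtained.
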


\begin{proof}
The assertion is done by proving the correctness and privacy, and  examining  the  length of the responses.

{\bf Proof of correctness:}
The user now receives
\begin{eqnarray}
\{A^{(i,\theta)}: i\in [0, N)\}=\bigcup_{j=0}^{S-1}\{A^{(i,\theta)}(j): i\in [0, N)\}.
\end{eqnarray}

Note that
$\left(Y^l_{q_{l,j},0}, Y^l_{q_{l,j},1}, \ldots, Y^l_{q_{l,j},N-1}\right)$
is a codeword of the $(N, K)$ MDS array code for all $l\in [0, M)$ and $j\in [0, S)$ by \eqref{Eqn_encoding} and \eqref{Eqn_dummy}, so does
\begin{eqnarray}\label{Eqn_codeword}
\left(\sum\limits_{l=0,l\ne \theta}^{M-1}Y^l_{q_{l,j},0}, \sum\limits_{l=0,l\ne \theta}^{M-1}Y^l_{q_{l,j},1}, \ldots, \sum\limits_{l=0,l\ne \theta}^{M-1}Y^l_{q_{l,j},N-1}\right).
\end{eqnarray}

For any given $j\in [0, S)$, define
\begin{eqnarray*}
  Z_j &=&  \{i\in [0, N):(q_{\theta,j}+i)_{B+S}\in [B, B+S)\},
\end{eqnarray*}

Obviously, $|Z_j|=S\times \gcd(N, K)=K$ since $(q_{\theta,j}+i)_{B+S}$ takes each value in the set $[0, B+S)$ exactly $\gcd(N, K)$ times with $i$ ranging over $[0, N)$ for any $q_{\theta,j}$. Furthermore, for $i\in  Z_j$, we have $Y^{\theta}_{(q_{\theta,j}+i)_{B+S},i}=\mathbf{0}$ by \eqref{Eqn_dummy}, thus together with \eqref{Eqn_ans_j_iteration} we have
\begin{eqnarray*}
A^{(i,\theta)}(j)=\sum\limits_{l=0,l\ne \theta}^{M-1}Y^l_{q_{l,j},i},~\mbox{for~}i\in  Z_j,
\end{eqnarray*}
by which we can get
\begin{eqnarray*}
\sum\limits_{l=0,l\ne \theta}^{M-1}Y^l_{q_{l,j},i},~\mbox{for~}i\in [0, N)\backslash Z_j
\end{eqnarray*}
since \eqref{Eqn_codeword} is a codeword of the $(N, K)$ MDS array code. Therefore, for $j\in [0, S-1)$, by  \eqref{Eqn_ans_j_iteration}, we can now get
\begin{eqnarray*}
Y^{\theta}_{(q_{\theta,j}+i)_{B+S},i},~\mbox{for~}i\in [0, N)\backslash Z_j
\end{eqnarray*}
from
$$\left(A^{(0,\theta)}(j), A^{(1,\theta)}(j), \ldots, A^{(N-1,\theta)}(j)\right).$$
Then we have all the data in the following set with cardinality being $S(N-K)=BK$,
\begin{eqnarray}\label{Eqn_all_coll_data}
\bigcup_{j=0}^{S-1}\{Y^{\theta}_{(q_{\theta,j}+i)_{B+S},i}|i\in [0, N)\backslash Z_j\},
\end{eqnarray}
\textit{i.e.,}
the following data is available,
\begin{eqnarray}\label{Eqn:avail_data}
\setlength{\arraycolsep}{0.5pt}
\left(\hspace{-0.8mm}
  \begin{array}{cccc}
    Y^{\theta}_{q_{\theta,0},0} & Y^{\theta}_{(q_{\theta,0}+1)_{B+S},1} & \cdots & Y^{\theta}_{(q_{\theta,0}+N-1)_{B+S},N-1} \\
     Y^{\theta}_{q_{\theta,1},0} & Y^{\theta}_{(q_{\theta,1}+1)_{B+S},1} & \cdots & Y^{\theta}_{(q_{\theta,1}+N-1)_{B+S},N-1} \\
\vdots & \vdots  & \ddots & \vdots  \\
Y^{\theta}_{q_{\theta,S-1},0} & Y^{\theta}_{(q_{\theta,S-1}+1)_{B+S},1} & \cdots & Y^{\theta}_{(q_{\theta,S-1}+N-1)_{B+S},N-1} \\
  \end{array}
\hspace{-0.8mm}\right)
\end{eqnarray}

For $t\in [0, B)$, according to \eqref{Eqn_encoding}, the $t$-th sub-stripe $\left(W^{\theta}_{t,0}, W^{\theta}_{t,1}, \ldots, W^{\theta}_{t,K-1}\right)$ of file $\theta$ can be reconstructed if we can retrieve any $K$ code-symbols of the following codeword
\begin{equation}\label{Eqn:codew}
\left(Y^{\theta}_{t, 0}, Y^{\theta}_{t, 1}, \ldots, Y^{\theta}_{t, N-1}\right)
\end{equation}
For  $j\in [0, S)$, we  define
\begin{eqnarray*}
  U_{t,j} =  \{i\in [0, N):(q_{\theta,j}+i)_{B+S}=t\},
\end{eqnarray*}
then $U_{t,j}$ implies  which  elements in the $j$-th block row of the matrix in  \eqref{Eqn:avail_data} are code-symbols in \eqref{Eqn:codew}.
Let
\begin{eqnarray*}
U_{t} = \bigcup_{j=0}^{S-1}U_{t,j},
\end{eqnarray*}
then   $U_{t}$ indicates all such elements in the matrix in \eqref{Eqn:avail_data} that are code-symbols in \eqref{Eqn:codew}.

Similarly we have $|U_{t,j}|=\gcd(N, K)$ for any $j\in [0, S)$, therefore, $|U_{t}|=S\times\gcd(N, K)=K$, which implies that we can get $K$ code-symbols in \eqref{Eqn:codew} from the available  data in \eqref{Eqn:avail_data} (or \eqref{Eqn_all_coll_data}).  Finally, we can reconstruct $W^{\theta}_{t,0}, W^{\theta}_{t,1}, \ldots, W^{\theta}_{t,K-1}$ by \eqref{Eqn_encoding}. With $t$ ranging over $[0, B)$, we get the file $W^{\theta}$.

{\bf  Proof of privacy:}
The PIR protocol is private since $Q^{(i,\theta)}$ is uniformly distributed on the set $\Omega$  defined in \eqref{Eqn_Q_i}, regardless of the value of $\theta$.

{\bf  Length of the responses:}
The expected length of the responses is
\begin{eqnarray*}
&&E\left(\sum\limits_{i=0}^{N-1}l_i\right)\\&=&\sum\limits_{i=0}^{N-1}E\left(l_i\right)\\&=&\alpha\sum\limits_{i=0}^{N-1}\sum\limits_{j=0}^{S-1}\mathrm{Prob}(\min \{q_{0,j}, \ldots, q_{\theta-1,j}, (q_{\theta,j}+i)_{B+S},\\&&\hspace{3.5cm} q_{\theta+1,j}, \ldots, q_{M-1,j}\}<B)\\&=&\alpha\sum\limits_{i=0}^{N-1}\sum\limits_{j=0}^{S-1}\left(1-\left(\frac{S}{B+S}\right)^M\right)\\&=&\alpha SN\left(1-\left(\frac{K}{N}\right)^M\right),
\end{eqnarray*}
where the second equality follows from \eqref{Eqn_dummy}, \eqref{Eqn_ans_j_iteration}, and \eqref{Eqn_len},   the third equality follows from Proposition \ref{prop2}.

By \eqref{Eqn_rate}, the retrieval rate is
\begin{eqnarray*}
\frac{L}{E\left(\sum\limits_{i=0}^{N-1}l_i\right)}=\frac{\alpha BK}{\alpha SN(1-(\frac{K}{N})^M)}=\frac{1-\frac{K}{N}}{1-(\frac{K}{N})^M},
\end{eqnarray*}
where the last equality follows from \eqref{Eqn_bs}. Hence, the  rate achieves the capacity in \eqref{Eqn_capacity_MDS} and the capacity of  PIR  from MSR-coded servers that derived in \cite{Patra_capacity_MSR}.
\end{proof}

\begin{Remark}
In contrast to those in \cite{Razane-IT18,Freij-SIAM}, the   retrieval rate of the proposed PIR protocol can achieve the capacity in \eqref{Eqn_capacity_MDS} because  the download
cost is asymmetric in different realizations of queries, which   inherits the techniques in \cite{Tian2019,Zhu2019,Zhou2019}.
\end{Remark}

\subsection{An illustrative example}
In this subsection, we give an illustrative example of a PIR protocol from the first $(N, K, \alpha=(N-K)^N)$ MDS array code with repair degree $D=N-1$ in \cite{Barg1}, where we choose $N=5$, $K=3$, and $M=2$. Then $B=2$ and $S=3$ according to \eqref{Eqn_bs} and $\alpha=32$. The data stored at each server is depicted as in Table \ref{Table_ex_struc}.

\begin{table}[htbp]
\begin{center}
\caption{The data stored at each server for the  PIR protocol based on a $(5, 3, 2^5)$ MDS array code }\label{Table_ex_struc}
\setlength{\tabcolsep}{4.2pt}
\begin{tabular}{|c|c|c|c|c|}
\hline
Server 0  & Server 1 &  Server 2& Server 3 & Server 4\\
\hline
$Y_{0,0}^0$  & $Y_{0,1}^0$ & $Y_{0,2}^0$ &$Y_{0,3}^0$ & $Y_{0,4}^0$\\
\hline
$Y_{1,0}^0$  & $Y_{1,1}^0$ & $Y_{1,2}^0$ &$Y_{1,3}^0$ & $Y_{1,4}^0$\\
\hline
$Y_{0,0}^1$  & $Y_{0,1}^1$ & $Y_{0,2}^1$ &$Y_{0,3}^1$ & $Y_{0,4}^1$\\
\hline
$Y_{1,0}^1$  & $Y_{1,1}^1$ & $Y_{1,2}^1$ &$Y_{1,3}^1$ & $Y_{1,4}^1$\\
\hline
\end{tabular}
\end{center}
\end{table}
Assume that the user requests $W^0$,  and randomly chooses a $2\times 3$ matrix $Q$ from \eqref{Eqn_query_space} as
\begin{equation*}
 Q=\left(\begin{array}{ccc}
 q_{0,0} & q_{0,1}  &  q_{0,2} \\
  q_{1,0} & q_{1,1}  &  q_{1,2} \\
\end{array}\right).
\end{equation*}
By \eqref{Eqn_Q_i}, the user sends the query $Q^{(j,0)}$ to server $j$:
\begin{eqnarray*}
 Q^{(0,0)}&=&\left(\begin{array}{ccc}
  q_{0,0} & q_{0,1}  &  q_{0,2} \\
  q_{1,0} & q_{1,1}  &  q_{1,2} \\
\end{array}\right),\\Q^{(1,0)}&=&\left(\begin{array}{ccc}
 (q_{0,0}+1)_{5} & (q_{0,1}+1)_{5}  &  (q_{0,2}+1)_{5} \\
  q_{1,0} & q_{1,1}  &  q_{1,2} \\
\end{array}\right),\\Q^{(2,0)}&=&\left(\begin{array}{ccc}
 (q_{0,0}+2)_{5} & (q_{0,1}+2)_{5}  &  (q_{0,2}+2)_{5} \\
  q_{1,0} & q_{1,1}  &  q_{1,2} \\
\end{array}\right),\\Q^{(3,0)}&=&\left(\begin{array}{ccc}
  (q_{0,0}+3)_{5} & (q_{0,1}+3)_{5}  &  (q_{0,2}+3)_{5} \\
  q_{1,0} & q_{1,1}  &  q_{1,2} \\
\end{array}\right),\\Q^{(4,0)}&=&\left(\begin{array}{ccc}
 (q_{0,0}+4)_{5} & (q_{0,1}+4)_{5}  &  (q_{0,2}+4)_{5} \\
  q_{1,0} & q_{1,1}  &  q_{1,2} \\
\end{array}\right).
\end{eqnarray*}

Upon receiving the queries, each server responds  according to  \eqref{Eqn_ans_j} and \eqref{Eqn_ans_j_iteration} as:
\begin{eqnarray*}
 A^{(0,0)}&=&\left(Y_{q_{0,0},0}^0+Y_{q_{1,0},0}^1,~ Y_{q_{0,1},0}^0+Y_{q_{1,1},0}^1, \right.\\ &&\hspace{3.3cm}\left. Y_{q_{0,2},0}^0+Y_{q_{1,2},0}^1\right),\\
 A^{(1,0)}&=&\left(Y_{(q_{0,0}+1)_5,1}^0+Y_{q_{1,0},1}^1,~ Y_{(q_{0,1}+1)_5,1}^0+Y_{q_{1,1},1}^1, \right.\\ &&\hspace{3.3cm}\left. Y_{(q_{0,2}+1)_5,1}^0+Y_{q_{1,2},1}^1\right),\\
 A^{(2,0)}&=&\left(Y_{(q_{0,0}+2)_5,2}^0+Y_{q_{1,0},2}^1,~ Y_{(q_{0,1}+2)_5,2}^0+Y_{q_{1,1},2}^1, \right.\\ &&\hspace{3.3cm}\left. Y_{(q_{0,2}+2)_5,2}^0+Y_{q_{1,2},2}^1\right),\\
 A^{(3,0)}&=&\left(Y_{(q_{0,0}+3)_5,3}^0+Y_{q_{1,0},3}^1,~ Y_{(q_{0,1}+3)_5,3}^0+Y_{q_{1,1},3}^1, \right.\\ &&\hspace{3.3cm}\left. Y_{(q_{0,2}+3)_5,3}^0+Y_{q_{1,2},3}^1\right),\\
 A^{(4,0)}&=&\left(Y_{(q_{0,0}+4)_5,4}^0+Y_{q_{1,0},4}^1,~ Y_{(q_{0,1}+4)_5,4}^0+Y_{q_{1,1},4}^1, \right.\\ &&\hspace{3.3cm}\left. Y_{(q_{0,2}+4)_5,4}^0+Y_{q_{1,2},4}^1\right).
\end{eqnarray*}

The user is able to retrieve  file $W^0$ according to the following procedure:

\begin{itemize}
    \item [(i)] From $A^{(i,0)}(0)$, $i\in [0, 5)$, the user gets
\begin{eqnarray}\label{Eqn_ex_ge}
\nonumber &&  \hspace{-1.2cm}\left(\hspace{-1mm}Y_{q_{0,0},0}^0\hspace{-0.3mm}+\hspace{-0.3mm}Y_{q_{1,0},0}^1, Y_{(q_{0,0}+1)_5,1}^0+Y_{q_{1,0},1}^1, Y_{(q_{0,0}+2)_5,2}^0\right.\\
&&\hspace{-1.2cm}\left. \hspace{-0.3mm}+\hspace{-0.3mm}Y_{q_{1,0},2}^1,  Y_{(q_{0,0}+3)_5,3}^0\hspace{-0.3mm}+\hspace{-0.3mm}Y_{q_{1,0},3}^1, Y_{(q_{0,0}+4)_5,4}^0\hspace{-0.3mm}+\hspace{-0.3mm}Y_{q_{1,0},4}^1\hspace{-1mm}\right)\hspace{-1mm}.
\end{eqnarray}
Note that
\begin{eqnarray}\label{Eqn_ex_0_5}
\{q_{0,0},~(q_{0,0}+1)_5,\ldots, (q_{0,0}+4)_5\}=[0,5),
\end{eqnarray}
thus according to \eqref{Eqn_dummy}, exactly three vectors among $$Y_{q_{0,0},0}^0,~Y_{(q_{0,0}+1)_5,1}^0,\ldots,Y_{(q_{0,0}+4)_5,4}^0$$ are $\mathbf{0}_{\alpha}$, \textit{i.e.,} from \eqref{Eqn_ex_ge}, the user can directly obtain  three vectors among
\begin{eqnarray}\label{Eqn_ex_ge_Y}
Y_{q_{1,0},0}^1,~Y_{q_{1,0},1}^1,\ldots,Y_{q_{1,0},4}^1,
\end{eqnarray}
from which the user can further get the rest two vectors in \eqref{Eqn_ex_ge_Y} by \eqref{Eqn_encoding}. With \eqref{Eqn_ex_ge} and \eqref{Eqn_ex_ge_Y} available, the user can get
\begin{equation}\label{Eqn_ex_ve_1}
 \small \hspace{-7mm}\!\left(Y_{q_{0,0},0}^0,  Y_{(q_{0,0}+1)_5,1}^0, Y_{(q_{0,0}+2)_5,2}^0,  Y_{(q_{0,0}+3)_5,3}^0, Y_{(q_{0,0}+4)_5,4}^0\right).\!
\end{equation}

\item [(ii)] From $A^{(i,0)}(1)$, $i\in [0, 5)$, the user can similarly obtain
\begin{equation}\label{Eqn_ex_ve_2}
\small \hspace{-7mm}\left(Y_{q_{0,1},0}^0,~ Y_{(q_{0,1}+1)_5,1}^0,~ Y_{(q_{0,1}+2)_5,2}^0,~  Y_{(q_{0,1}+3)_5,3}^0,~ Y_{(q_{0,1}+4)_5,4}^0\right).
\end{equation}
\item [(iii)] From $A^{(i,0)}(2)$, $i\in [0, 5)$, the user similarly gets
\begin{equation}\label{Eqn_ex_ve_3}
\small \hspace{-7mm}\left(Y_{q_{0,2},0}^0,~ Y_{(q_{0,2}+1)_5,1}^0,~ Y_{(q_{0,2}+2)_5,2}^0,~  Y_{(q_{0,2}+3)_5,3}^0,~ Y_{(q_{0,2}+4)_5,4}^0\right).
\end{equation}
\end{itemize}
Note that exactly two block entries in each of the block vectors in \eqref{Eqn_ex_ve_1}-\eqref{Eqn_ex_ve_3} are not $\mathbf{0}_{\alpha}$ by \eqref{Eqn_dummy} and  \eqref{Eqn_ex_0_5}, which can be denoted as \begin{equation}\label{Eqn_ex_avaY}
Y_{0,u_0}^0,~ Y_{0,u_1}^0,~ Y_{0,u_2}^0,~Y_{1,v_0}^0,~Y_{1,v_1}^0,~Y_{1,v_2}^0,
\end{equation}
for some $u_0,~u_1,~u_2,~v_0,~v_1,~v_2\in [0,5)$, where
$u_0,~u_1,~u_2$ are pairwise distinct and $v_0,~v_1,~v_2$ are also pairwise distinct since $q_{0,0},~q_{0,1},~q_{0,2}$ are pairwise distinct.

With the   data in \eqref{Eqn_ex_avaY} available, $W^0$ can be retrieved according to \eqref{Eqn_encoding}.

From the above analysis, we can derive the length of the responses as
\begin{eqnarray*}
\sum\limits_{i=0}^{4}l_i&=&\alpha\sum\limits_{i=0}^{4}\sum\limits_{j=0}^{2}\mathbf{1}_{\{A^{(i,0)}(j)\ne \mathbf{0}_{\alpha}\}}\\&=&\alpha\sum\limits_{j=0}^{2}\sum\limits_{i=0}^{4}\mathbf{1}_{\{A^{(i,0)}(j)\ne \mathbf{0}_{\alpha}\}}\\&=&\alpha\sum\limits_{j=0}^{2}(2+3\times \mathbf{1}_{\{q_{1,j}<2\}})\\&=&6\alpha+3\alpha\sum\limits_{j=0}^{2}\mathbf{1}_{\{q_{1,j}<2\}}  ,
\end{eqnarray*}
where $\mathbf{1}_{\{*\}}$ is an indicator function defined by
\begin{equation*}
    \mathbf{1}_{\{*\}}=\left\{\begin{array}{cc}
    1, & \mbox{if\,\,*\,\,is\,\,true},\\
 0, & \mbox{otherwise}.
\end{array}\right.
\end{equation*}

Let us go further if we know the matrix $Q$, for example, if
\begin{equation*}
 Q=\left(\begin{array}{ccc}
 0 & 2  &  4 \\
  1 & 3  &  0 \\
\end{array}\right),
\end{equation*}
then the queries are
\begin{eqnarray*}
 Q^{(0,0)}&=&\left(\begin{array}{ccc}
 0 & 2  &  4 \\
  1 & 3  &  0 \\
\end{array}\right),~Q^{(1,0)}=\left(\begin{array}{ccc}
 1 & 3  &  0 \\
  1 & 3  &  0 \\
\end{array}\right),\\Q^{(2,0)}&=&\left(\begin{array}{ccc}
 2 & 4  &  1 \\
  1 & 3  &  0 \\
\end{array}\right),~Q^{(3,0)}=\left(\begin{array}{ccc}
 3 & 0  &  2 \\
  1 & 3  &  0 \\
\end{array}\right),\\Q^{(4,0)}&=&\left(\begin{array}{ccc}
 4 & 1  &  3 \\
  1 & 3  &  0 \\
\end{array}\right).
\end{eqnarray*}

The responses from the servers are:
\begin{eqnarray*}
 A^{(0,0)}&=&(Y_{0,0}^0+Y_{1,0}^1,~ \mathbf{0},~ Y_{0,0}^1),\\
 A^{(1,0)}&=&(Y_{1,1}^0+Y_{1,1}^1,~ \mathbf{0},~ Y_{0,1}^0+Y_{0,1}^1),\\
 A^{(2,0)}&=&(Y_{1,2}^1,~ \mathbf{0},~ Y_{1,2}^0+Y_{0,2}^1),\\
 A^{(3,0)}&=&(Y_{1,3}^1,~ Y_{0,3}^0,~ Y_{0,3}^1),\\
 A^{(4,0)}&=&(Y_{1,4}^1,~ Y_{1,4}^0,~ Y_{0,4}^1).
\end{eqnarray*}

The user is able to retrieve the file $W^0$ according to the following procedure:

\begin{itemize}
    \item [(i)] From $A^{(i,0)}(0)$, $i\in [0, 5)$, the user obtains
\begin{eqnarray}\label{Eqn_ans_0iter}
(Y_{0,0}^0+Y_{1,0}^1,~ Y_{1,1}^0+Y_{1,1}^1,~ Y_{1,2}^1,~  Y_{1,3}^1,~ Y_{1,4}^1).
\end{eqnarray}
From the last three block entries in \eqref{Eqn_ans_0iter}, the user can recover $Y_{1,0}^1$ and $Y_{1,1}^1$ according to \eqref{Eqn_encoding}, which together with the first two block entries in \eqref{Eqn_ans_0iter} give $Y_{0,0}^0$ and $Y_{1,1}^0$.
\item [(ii)] From $A^{(i,0)}(1)$, $i\in [0, 5)$, the user gets $Y_{0,3}^0$ and $Y_{1,4}^0$.
\item [(iii)] From $A^{(i,0)}(2)$, $i\in [0, 5)$, the user obtains
\begin{eqnarray*}
\left(Y_{0,0}^1,~Y_{0,1}^0+Y_{0,1}^1,~Y_{1,2}^0+Y_{0,2}^1,~ Y_{0,3}^1,~Y_{0,4}^1\right),
\end{eqnarray*}
from which the user can similarly obtain $Y_{0,1}^0$ and $Y_{1,2}^0$.
\end{itemize}
With the above available data, $W^0$ can be retrieved according to \eqref{Eqn_encoding}.

Furthermore, in the case of a single server failure, the repair bandwidth is $M\times B\times\gamma_{\mathbf{MSR}}=2^8$ according to \eqref{Eqn_MSR_parameters} since the  code in \cite{Barg1} is an MSR code with repair degree $D=N-1$, which is only a fraction of $\frac{2^8}{M\alpha BK}=\frac{2}{3}$ of the size of all the files.

\section{Applications and Comparisons}

In this section, we first give four new PIR protocols by employing some known MDS array codes to the PIR protocol presented in the previous section, and then give  comparisons of some key metrics among the new   PIR protocols and some existing ones.

\begin{Application}
Suppose we choose the $(N, K, \alpha=K-1)$ PM-MSR code in \cite{PM} to encode the files, where $N
\ge 2K-1$, then we get an  $(N, K)$ PIR protocol,  which is termed new PIR protocol from PM-MSR code.  The sub-packetization of this new PIR protocol is $(K-1)K\frac{N-K}{\gcd(N, K)}$, where each server can be optimally repaired with the repair degree $D$ being $2K-2$, and the retrieval rate meets the capacity in \cite{Banawan_MDS_capacity,Patra_capacity_MSR}. This outperforms the results in \cite{Dorkson_PM_PIR} and \cite{Julien_PM_PIR}. Note that this new PIR protocol is only applicable to the low code rate case.
\end{Application}

\begin{Application}
We can choose a binary MDS array code to encode the files, for example, the  $(N=K+2, K, \alpha=2^{K+1})$ binary MDS array code obtained by operating the transformation in \cite{Bitran} to the code in \cite{MDR2Wang}.  Then we get an  $(N, K)$ PIR protocol over $\mathbf{F}_2$,   which is termed new PIR protocol from binary MDS code in this paper, the sub-packetization is $2^{K+1}K\frac{2}{\gcd(N, K)}$, each server can be optimally repaired with $D=N-1$, and the retrieval rate meets the capacity in \cite{Banawan_MDS_capacity,Patra_capacity_MSR}. The new PIR protocol outperforms most known  PIR protocols from MDS-coded servers in terms of the field size and the repair bandwidth of a  single  server failure, though a  larger sub-packetization is required.
This new PIR protocol is particularly suitable for the high code rate case with rate arbitrarily close to 1.
\end{Application}

\begin{Application}
We  choose the  new $(N, K, \alpha=(N-K)^{\frac{N}{s}-1})$ MDS array code $\mathcal{C}_3$ in \cite{Li_eMSR_ISIT} to encode the files, where $s$ is an arbitrary nontrivial factor of $N$ such that $\frac{N}{s}>N-K$,  the code is over a finite field $\mathbf{F}_q$ with $q>s(N-K)$, and the repair bandwidth is near-optimal. Then we get an  $(N, K)$ PIR protocol over $\mathbf{F}_q$,  which is termed new PIR protocol from $\epsilon$-MSR code, the sub-packetization is $(N-K)^{\frac{N}{s}-1}K\frac{N-K}{\gcd(N, K)}$, each server can be near-optimally repaired, and  the retrieval rate meets the capacity in \cite{Banawan_MDS_capacity,Patra_capacity_MSR}. The new PIR protocol derived here is  particularly suitable for the high code rate case with $\frac{K}{N}>\frac{1}{2}$ since $\frac{N}{s}>N-K$.
\end{Application}

\begin{Application}
We can choose the $(N, K, \alpha=(N-K)^{\frac{N}{N-K}})$ MDS array code obtained from the second application in \cite{transform-IT} to encode the files, this kind of code has the optimal node capacity w.r.t. to the bound in \cite{tight_bound_on_alpha} and is also derived in \cite{Barg2,Sasidharan-Kumar2}, we term the code as optimal node capacity code in this paper.  Then we get an  $(N, K)$ PIR protocol over $\mathbf{F}_q$ with $q>N$, which is termed new PIR protocol from optimal node capacity code, the sub-packetization is $(N-K)^{\frac{N}{N-K}}K\frac{N-K}{\gcd(N, K)}$, each server can be optimally repaired with $D=N-1$, and the retrieval rate meets the capacity in \cite{Banawan_MDS_capacity,Patra_capacity_MSR}.
\end{Application}

\begin{Remark}
In general, there are of course many more applications in addition to the above four, depending on what properties are desired for the system in question. For example, we can also choose the $(N, K)$ MDS array codes in \cite{Barg1} to encode the files, which allow the greatest flexibility in choosing the helper servers when repairing failed servers. Specifically,  the number of helper servers can be anywhere from $K+1$ to $N-1$ and can simultaneously repair multiple server failures. However, the sub-packetization would be larger than those in the above applications.
\end{Remark}

\begin{table*}[htbp]
\begin{center}
\caption{A comparison of some key parameters among the $(N, K)$ PIR protocols proposed in this paper and some existing ones, where the optimal value of the repair bandwidth refers to $\gamma_{MSR}$ in \eqref{Eqn_MSR_parameters} with $D=2K-2$ for the PIR protocols from PM-MSR code and $D=N-1$ for other PIR protocols.}\label{Table comp}
\setlength{\tabcolsep}{1.5pt}
\resizebox{\textwidth}{!}{
\begin{tabular}{|c|c|c|c|c|c|}
\hline
&\multirow{2}{*}{Sub-packatization $L$}& \multirow{2}{*}{Field size $q$ }& The ratio $\overline{\gamma}$ of repair bandwidth   & \multirow{2}{*}{Retrieval rate $R^a$}& \multirow{2}{*}{Constraint} \\
&   &  & to the  optimal value   &  &\\
\hline\hline
New PIR protocol from   & \multirow{2}{*}{$L_1=(K-1)K\frac{N-K}{\gcd(N, K)}$} &  \multirow{2}{*}{$q_1>N$} & \multirow{2}{*}{$\overline{\gamma}_1=1$} &  \multirow{2}{*}{$R^a_1=\frac{1-\frac{K}{N}}{1-(\frac{K}{N})^M}$} & \multirow{2}{*}{$N\ge 2K-1$} \\
 PM-MSR code \cite{PM}&&&&&\\
\hline
PIR protocol from  PM-MSR  & \multirow{2}{*}{$L_5=(K-1)K\frac{N-K}{\gcd(N, K)}$}  & \multirow{2}{*}{$q_5>N$}    & \multirow{2}{*}{$\overline{\gamma}_5=1$}  & \multirow{2}{*}{$R^a_5=1-\frac{2K-2}{N}$} & \multirow{2}{*}{$N\ge 2K-1$}\\
code  by Dorkson \textit{et al.} \cite{Dorkson_PM_PIR}&&&&&\\
\hline
PIR protocol from  PM-MSR  &   \multirow{2}{*}{$L_6=(K-1)K\frac{N-K}{\gcd(N, K)}$}  & \multirow{2}{*}{$q_6>N$}    & \multirow{2}{*}{$\overline{\gamma}_6=1$}  & \multirow{2}{*}{$R^a_6=1-\frac{4K-2}{3N-2K+4}$} & \multirow{2}{*}{$N\ge 2K-1$}\\
code by Lavauzelle \textit{et al.} \cite{Julien_PM_PIR}&&&&&\\
\hline\hline
New PIR protocol from  & \multirow{2}{*}{$L_2=2^{K+1}K\frac{N-K}{\gcd(N, K)}$} &  \multirow{2}{*}{$q_2=2$} & \multirow{2}{*}{$\overline{\gamma}_2=1$} & \multirow{2}{*}{$R^a_2=\frac{1-\frac{K}{N}}{1-(\frac{K}{N})^M}$} & \multirow{2}{*}{$N-K=2$} \\
binary MDS code \cite{Bitran,MDR2Wang} &&&&&\\
\hline
New PIR protocol from  & $L_3=(N-K)^{\frac{N}{s}-1}$ &  \multirow{2}{*}{$q_3>s(N-K)$} &  \multirow{2}{*}{$\overline{\gamma}_3=(1+\frac{(s-1)(N-K-1)}{N-1})$} & \multirow{2}{*}{$R^a_3=\frac{1-\frac{K}{N}}{1-(\frac{K}{N})^M}$} & $\frac{N}{s}>N-K$ \\
$\epsilon$-MSR code \cite{Li_eMSR_ISIT} & $\times K\frac{N-K}{\gcd(N, K)}$&&&& $s\ge 2$\\
\hline
New PIR protocol from optimal   & $L_4=(N-K)^{\frac{N}{N-K}}$ &  \multirow{2}{*}{$q_4>N$} & \multirow{2}{*}{$\overline{\gamma}_4=1$} & \multirow{2}{*}{$R^a_4=\frac{1-\frac{K}{N}}{1-(\frac{K}{N})^M}$} &  \\
node capacity code \cite{transform-IT}&$\times K\frac{N-K}{\gcd(N, K)}$ & &&&\\
\hline
PIR protocols by Zhu \textit{et al.}   & \multirow{2}{*}{$L_7=K\frac{N-K}{\gcd(N, K)}$} &  \multirow{2}{*}{$q_7>N$} & \multirow{2}{*}{$\overline{\gamma}_7=\frac{K(N-K)}{N-1}$} & \multirow{2}{*}{$R^a_7=\frac{1-\frac{K}{N}}{1-(\frac{K}{N})^M}$} &  \\\cite{Zhu2019} and Zhou \textit{et al.} \cite{Zhou2019} &&&&&\\
\hline
PIR protocol by  Banawan   & \multirow{2}{*}{$L_8=KN^M$} &  \multirow{2}{*}{$q_8>N$} & \multirow{2}{*}{$\overline{\gamma}_8=\frac{K(N-K)}{N-1}$} & \multirow{2}{*}{$R^a_8=\frac{1-\frac{K}{N}}{1-(\frac{K}{N})^M}$} &  \\and Ulukus \cite{Banawan_MDS_capacity} &&&&&\\
\hline
\end{tabular}}
\end{center}
\end{table*}

Table \ref{Table comp} gives a comparison of some key parameters among the  $(N, K)$ PIR protocols proposed in this paper and some existing ones \cite{Banawan_MDS_capacity,Zhu2019,Zhou2019,Dorkson_PM_PIR,Julien_PM_PIR}.
It is seen that the retrieval rate of each of the new  PIR protocols achieves the capacity  in \cite{Banawan_MDS_capacity,Patra_capacity_MSR}. Besides, under the same parameters $N$ and $K$, we have the following results.
\begin{itemize}
\item  [i)] For all the new PIR protocols, in the case of a single server failure, it can be (near-)optimally repaired, which outperform the PIR protocols in \cite{Banawan_MDS_capacity,Zhu2019,Zhou2019}, more specifically,
\begin{eqnarray*}
1=\overline{\gamma_1}=\overline{\gamma_2}=\overline{\gamma_4}=\overline{\gamma_5}=\overline{\gamma_6}<\overline{\gamma_3}<\overline{\gamma_7}=\overline{\gamma_8}.
\end{eqnarray*}

\item [ii)] The new PIR protocol from  PM-MSR code has larger retrieval rate than that of the PIR protocol from  PM-MSR code  in \cite{Julien_PM_PIR} and the one in \cite{Dorkson_PM_PIR}, while all the other parameters are the same. It also achieves the capacity in \cite{Banawan_MDS_capacity,Patra_capacity_MSR}. In general, we have
\begin{eqnarray*}
C_{MDS}\hspace{-0.15mm}=\hspace{-0.15mm}R^a_1\hspace{-0.15mm}=\hspace{-0.15mm}R^a_2\hspace{-0.15mm}=\hspace{-0.15mm}R^a_3\hspace{-0.15mm}=\hspace{-0.15mm}R^a_4\hspace{-0.15mm}=\hspace{-0.15mm}R^a_7\hspace{-0.15mm}=\hspace{-0.15mm}R^a_8\hspace{-0.15mm}>\hspace{-0.15mm}R^a_6\hspace{-0.15mm}>\hspace{-0.15mm}R^a_5,
\end{eqnarray*}
where $C_{MDS}$ is defined as in \eqref{Eqn_capacity_MDS}.

\item  [iii)] The new PIR protocol from binary MDS code works over $\mathbf{F}_2$, which greatly reduces the complexity of the system since only XOR operations are needed, although the protocol only works for two parity servers. Nevertheless, few servers can be considered beneficial from the collusion and network congestion point of view. In addition, the new PIR protocol from $\epsilon$-MSR code is also built on a smaller finite field than that of the existing ones. Clearly, we have
\begin{eqnarray*}
q_2<q_1=q_5=q_6=q_7=q_8=q_4,
\end{eqnarray*}
and additionally $q_2<q_3\le q_1$
if $s<\frac{N}{N-K}$.


\item  [iv)]  In addition to the repair efficiency, the new PIR protocols outperform the one proposed by Banawan and Ulukus \cite{Banawan_MDS_capacity} in terms of the sub-packetization if $M\ge N$. More specifically,  if $M\ge N$, we can derive
\begin{eqnarray*}
\hspace{-3mm}\left\{\begin{array}{ll}
L_7<\hspace{-1mm}L_3\hspace{-1mm}\le \hspace{-1mm}L_4\hspace{-1mm}<L_2\hspace{-1mm}< \hspace{-1mm}L_8, \hspace{-2mm} &\mbox{if}\hspace{1mm}N-K=2,  2\le s< \frac{N}{2},\\
L_7\hspace{-1mm}<\hspace{-1mm}L_a\hspace{-1mm}<\hspace{-1mm}L_8,\hspace{-2mm} &\mbox{if}\hspace{1mm}2<N-K<K-1,\\
L_7\hspace{-1mm}<\hspace{-1mm}L_1\hspace{-1mm}=\hspace{-1mm}L_5\hspace{-1mm}=\hspace{-1mm}L_6\hspace{-1mm}<\hspace{-1mm}L_4\hspace{-1mm}<\hspace{-1mm}L_8,\hspace{-2mm} &\mbox{if}\hspace{1mm}N-K\ge K-1.
\end{array}\right.
\end{eqnarray*}
where $a=3,4$. Additionally,
\begin{eqnarray*}
\hspace{-3mm}\left\{\begin{array}{ll}
\hspace{-1mm} L_4 \hspace{-1mm}\le \hspace{-1mm}L_3, \hspace{-1mm}  &\mbox{if} \hspace{1mm} 2\hspace{-1mm}\le \hspace{-1mm}s\hspace{-1mm}\le \hspace{-1mm} \min\{\frac{N(N-K)}{2N-K}, \frac{K}{N-K}\},\\
\hspace{-1mm} L_3 \hspace{-1mm}<\hspace{-1mm} L_4,\hspace{-1mm} &\mbox{if} \hspace{1mm}  2\hspace{-1mm}\le \hspace{-1mm}N\hspace{-1mm}-\hspace{-1mm}K\hspace{-1mm}<\hspace{-1mm}1\hspace{-1mm}+\hspace{-1mm}\sqrt{K+1}, \frac{N(N-K)}{2N-K}\hspace{-1mm}<\hspace{-1mm}s<\hspace{-1mm}\frac{N}{N-K}.\\
\end{array}\right.
\end{eqnarray*}
\end{itemize}

In particular, under some specific parameters $N, K$, and $M$, we give  the detailed  comparisons of the sub-packetization and the repair bandwidth among the   PIR protocols proposed in this paper and some existing ones \cite{Banawan_MDS_capacity,Zhu2019,Zhou2019,Dorkson_PM_PIR,Julien_PM_PIR}  in Figures \ref{picture 1}-\ref{picture 4} and Figures \ref{picture ga 1}-\ref{picture ga 4}, respectively,
where Figures \ref{picture 1} and \ref{picture ga 1} focus on  $N-K=2$, Figures  \ref{picture 2}  and \ref{picture ga 2} focus on $N-K=3$, Figures \ref{picture 3} and \ref{picture ga 3} focus on fixed code rate $\frac{K}{N}=\frac{1}{2}$, Figures \ref{picture 4} and \ref{picture ga 4} focus  on fixed code rate $\frac{K}{N}=\frac{2}{3}$.
\begin{figure}[htbp]
\centering
\begin{minipage}[t]{0.48\textwidth}
\centering
\includegraphics[scale=0.55]{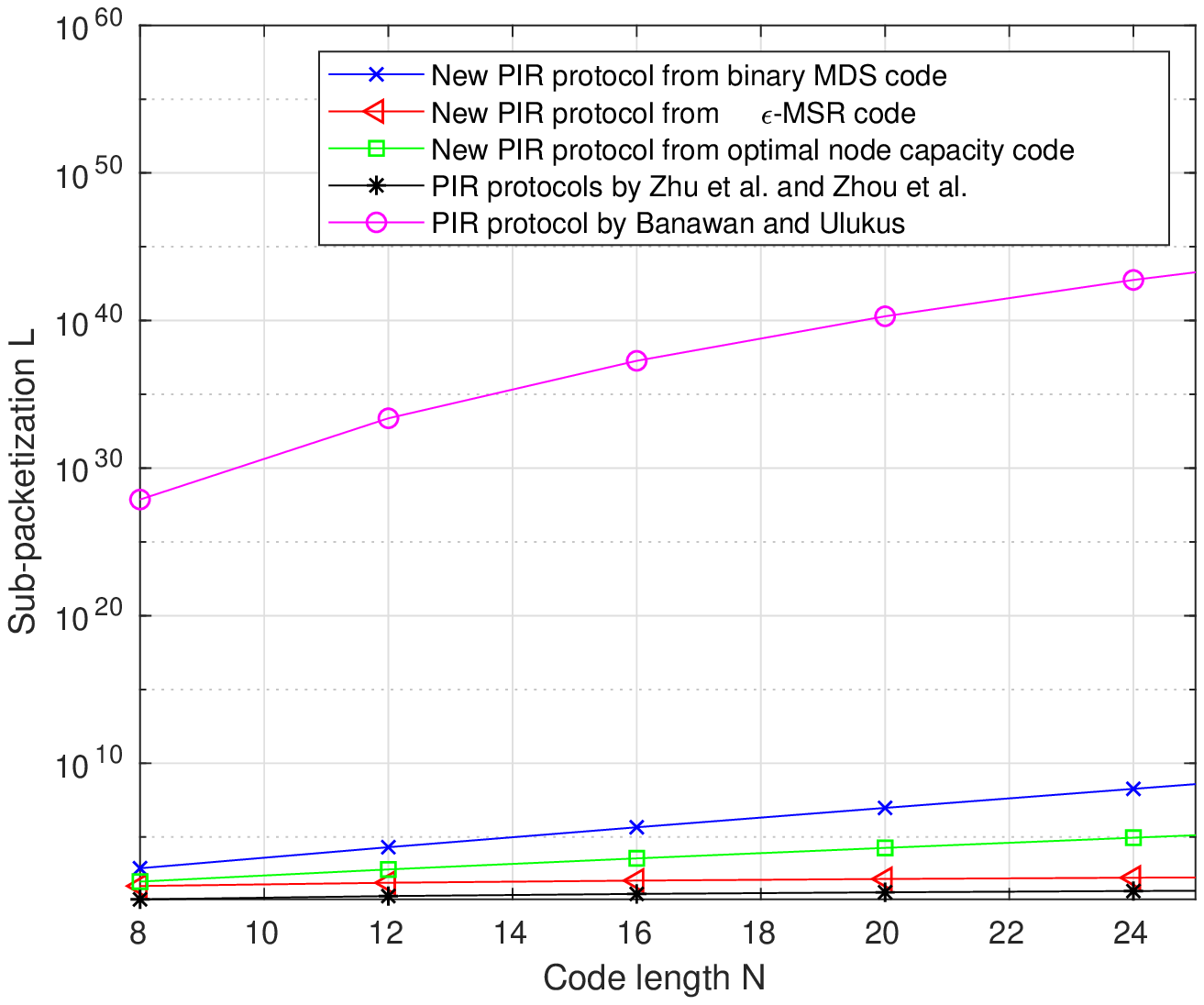}
\caption{Comparisons of the sub-packetization among new PIR protocols and some known ones under the parameters $N-K=2, s=\frac{N}{4}$, and $M=30$}\label{picture 1}
\end{minipage}
\hfill
\begin{minipage}[t]{0.48\textwidth}
\centering
\includegraphics[scale=0.55]{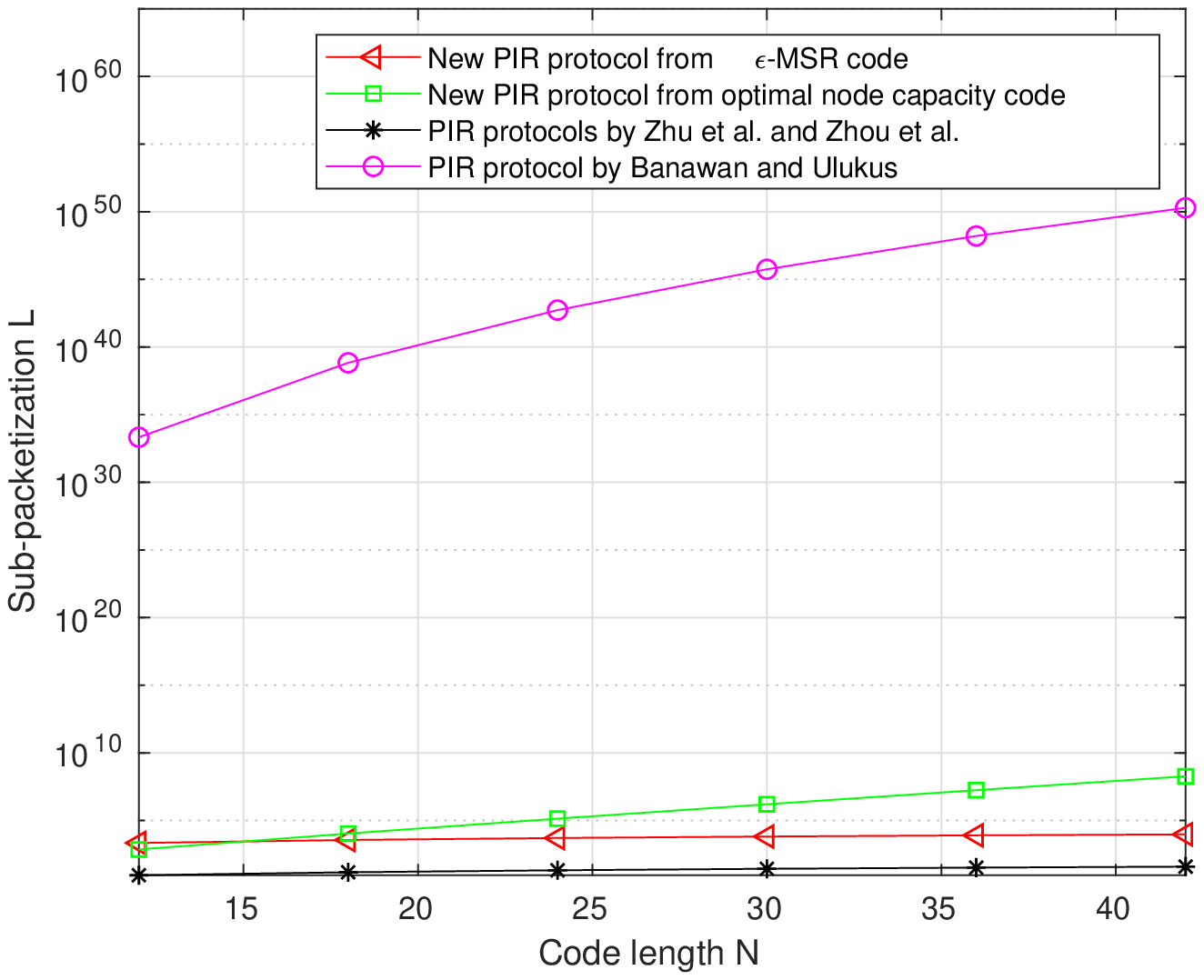}
\caption{Comparisons of the sub-packetization among new PIR protocols and some known ones under the parameters $N-K=3, s=\frac{N}{6}$, and $M=30$}\label{picture 2}
\end{minipage}
\end{figure}

\begin{figure}[htbp]
\centering
\begin{minipage}[t]{0.48\textwidth}
\centering
\includegraphics[scale=0.6]{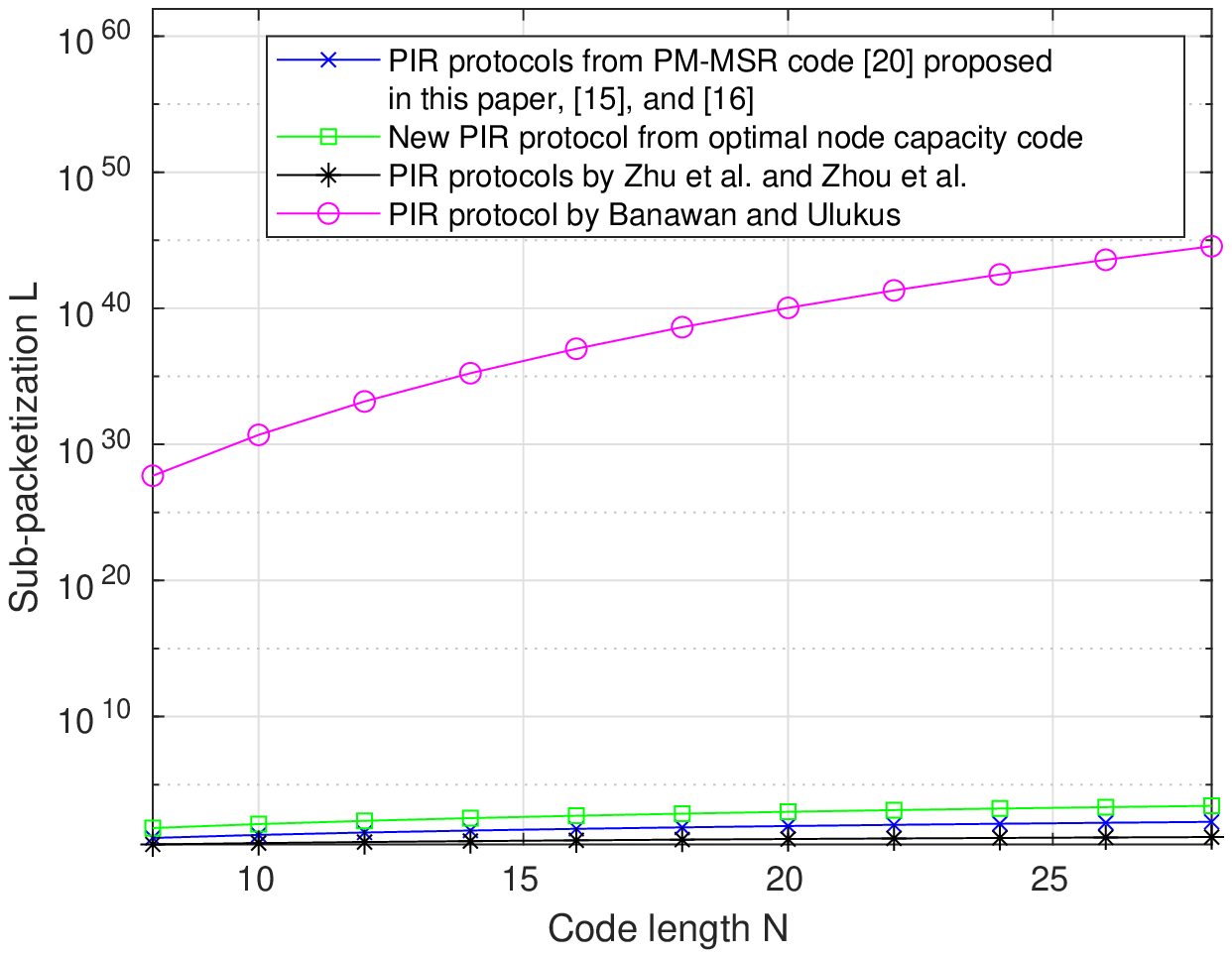}
\caption{Comparisons of the sub-packetization among new PIR protocols and some known ones under fixed code rate $\frac{K}{N}=\frac{1}{2}$, and $M=30$}\label{picture 3}
\end{minipage}
\hfill
\begin{minipage}[t]{0.48\textwidth}
\centering
\includegraphics[scale=0.55]{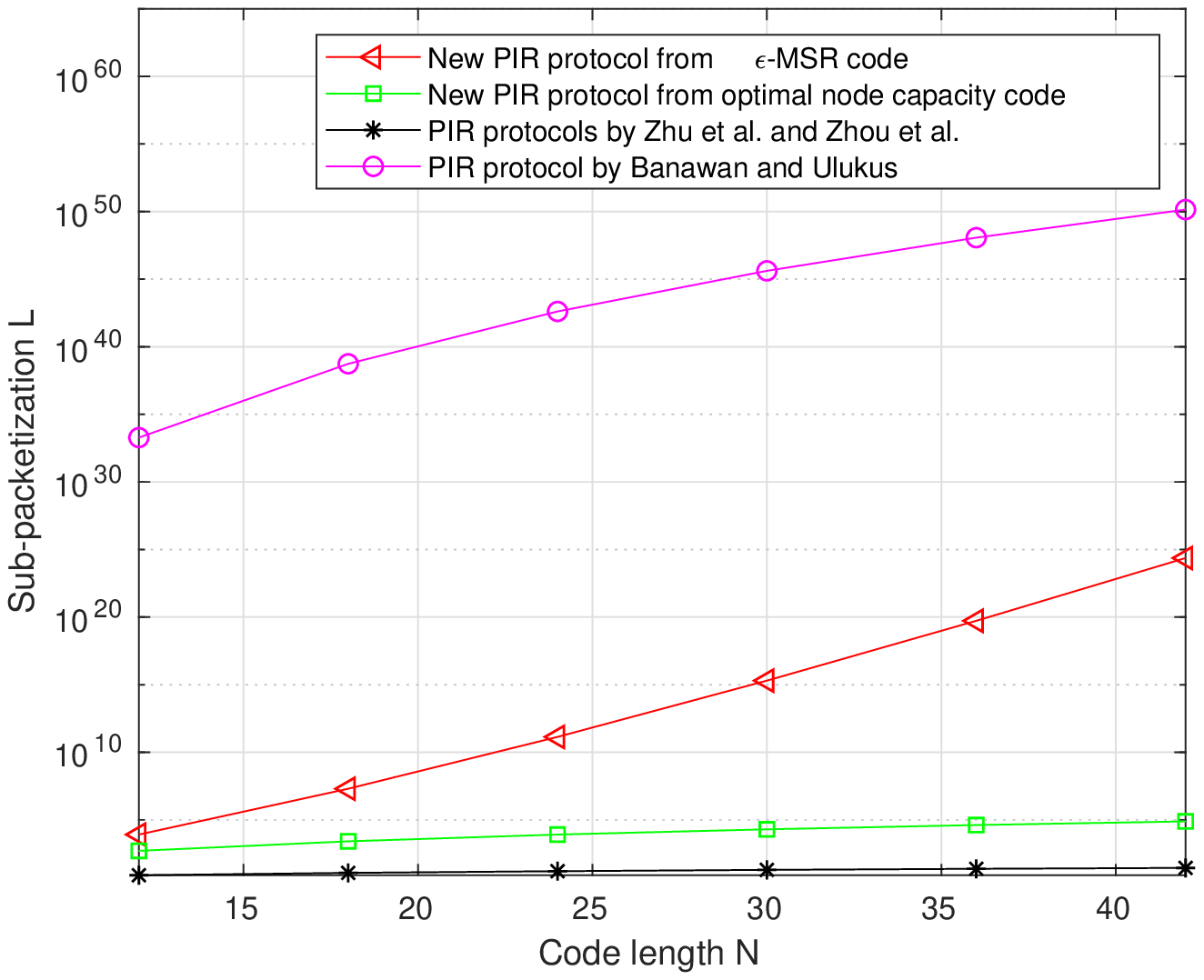}
\caption{Comparisons of the sub-packetization among new PIR protocols and some known ones under fixed code rate $\frac{K}{N}=\frac{2}{3}$, and $M=30$}\label{picture 4}
\end{minipage}
\end{figure}

\begin{figure}[htbp]
\centering
\begin{minipage}[t]{0.48\textwidth}
\centering
\includegraphics[scale=0.55]{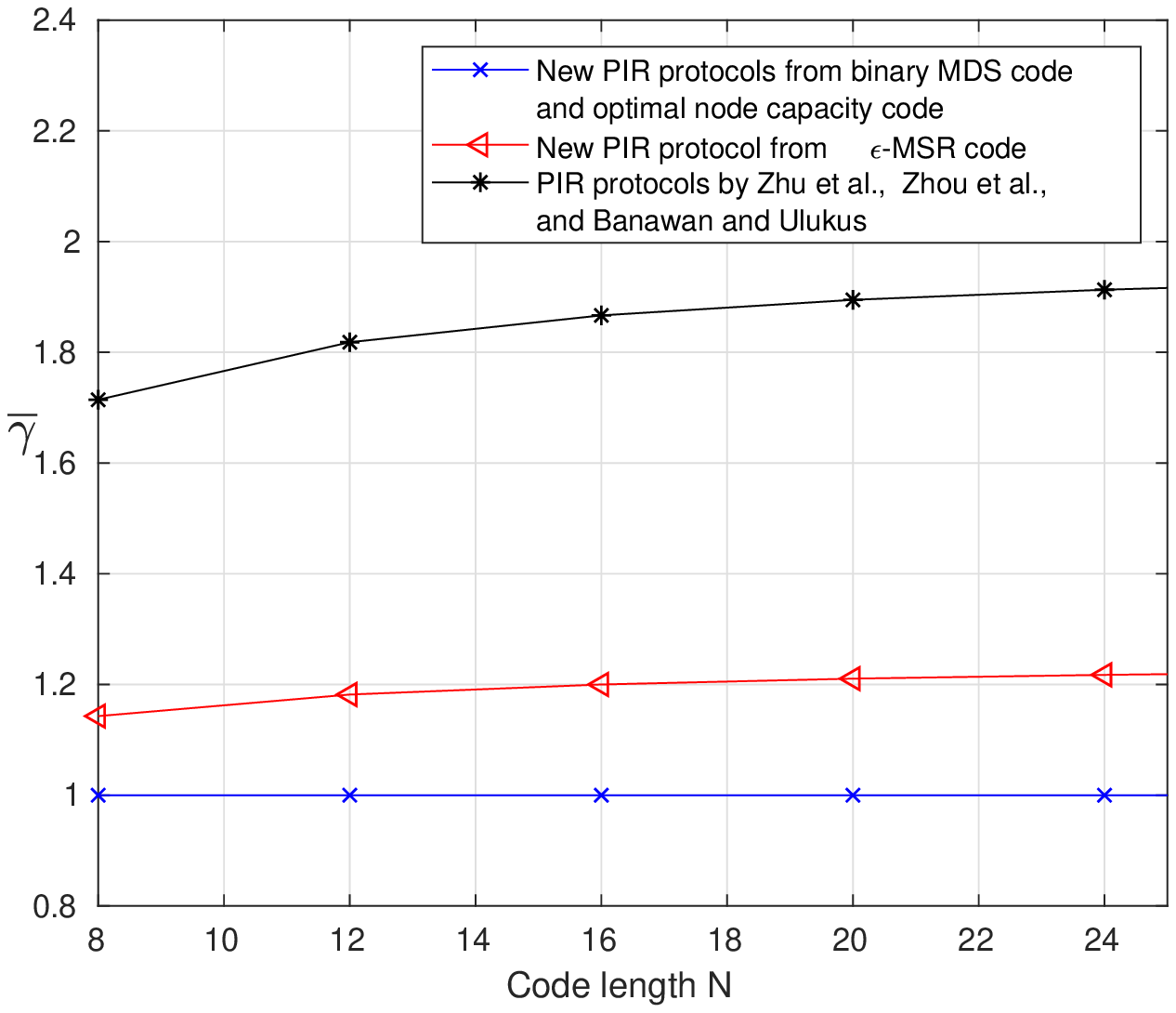}
\caption{Comparisons of the ratio $\overline{\gamma}$ of repair bandwidth  to the  optimal value   among new PIR protocols and some known ones under the parameters $N-K=2, s=\frac{N}{4}$}\label{picture ga 1}
\end{minipage}
\hfill
\begin{minipage}[t]{0.48\textwidth}
\centering
\includegraphics[scale=0.55]{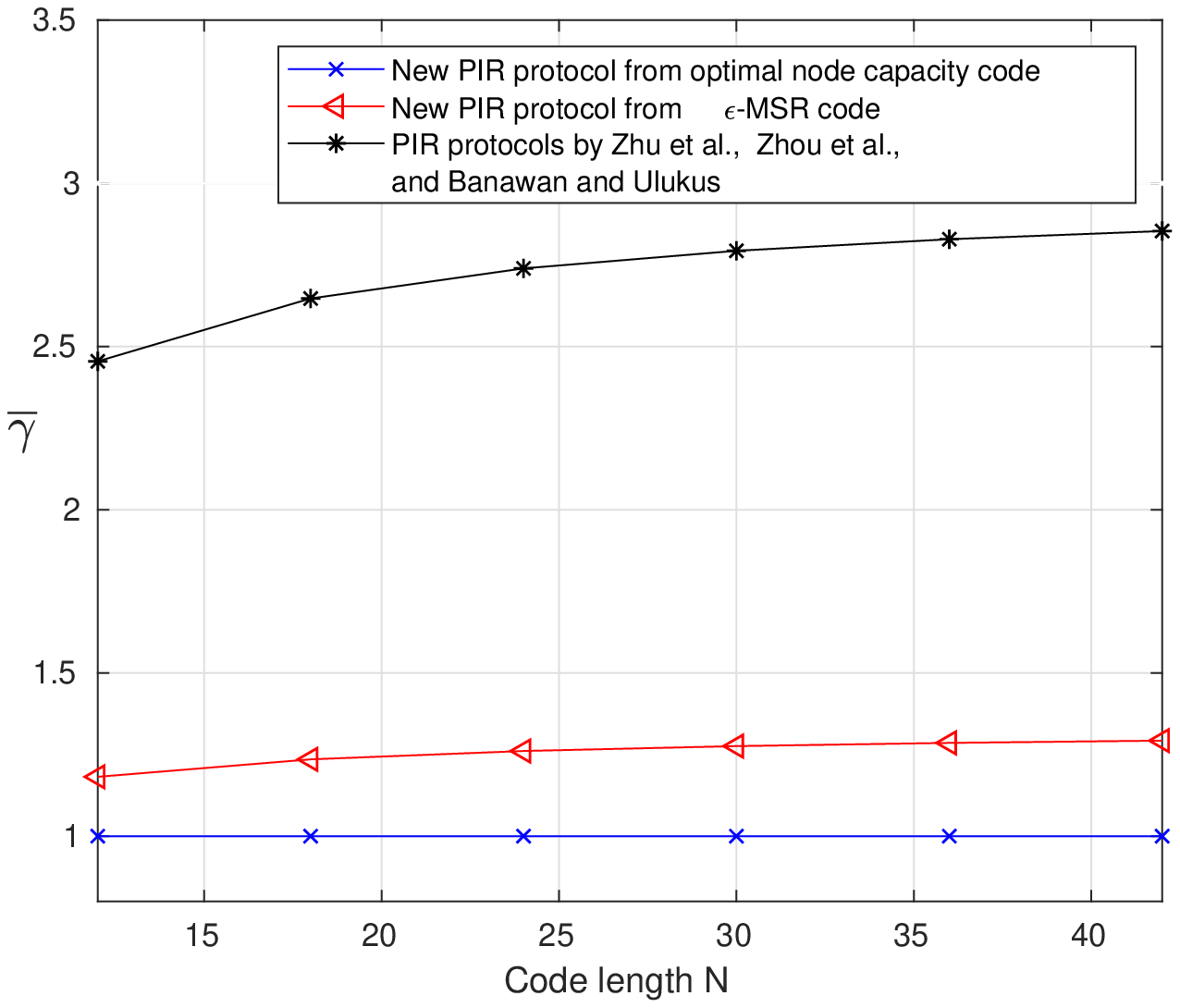}
\caption{Comparisons of the ratio $\overline{\gamma}$ of repair bandwidth  to the  optimal value among new PIR protocols and some known ones under the parameters $N-K=3, s=\frac{N}{6}$}\label{picture ga 2}
\end{minipage}
\end{figure}

\begin{figure}[htbp]
\centering
\begin{minipage}[t]{0.48\textwidth}
\centering
\includegraphics[scale=0.6]{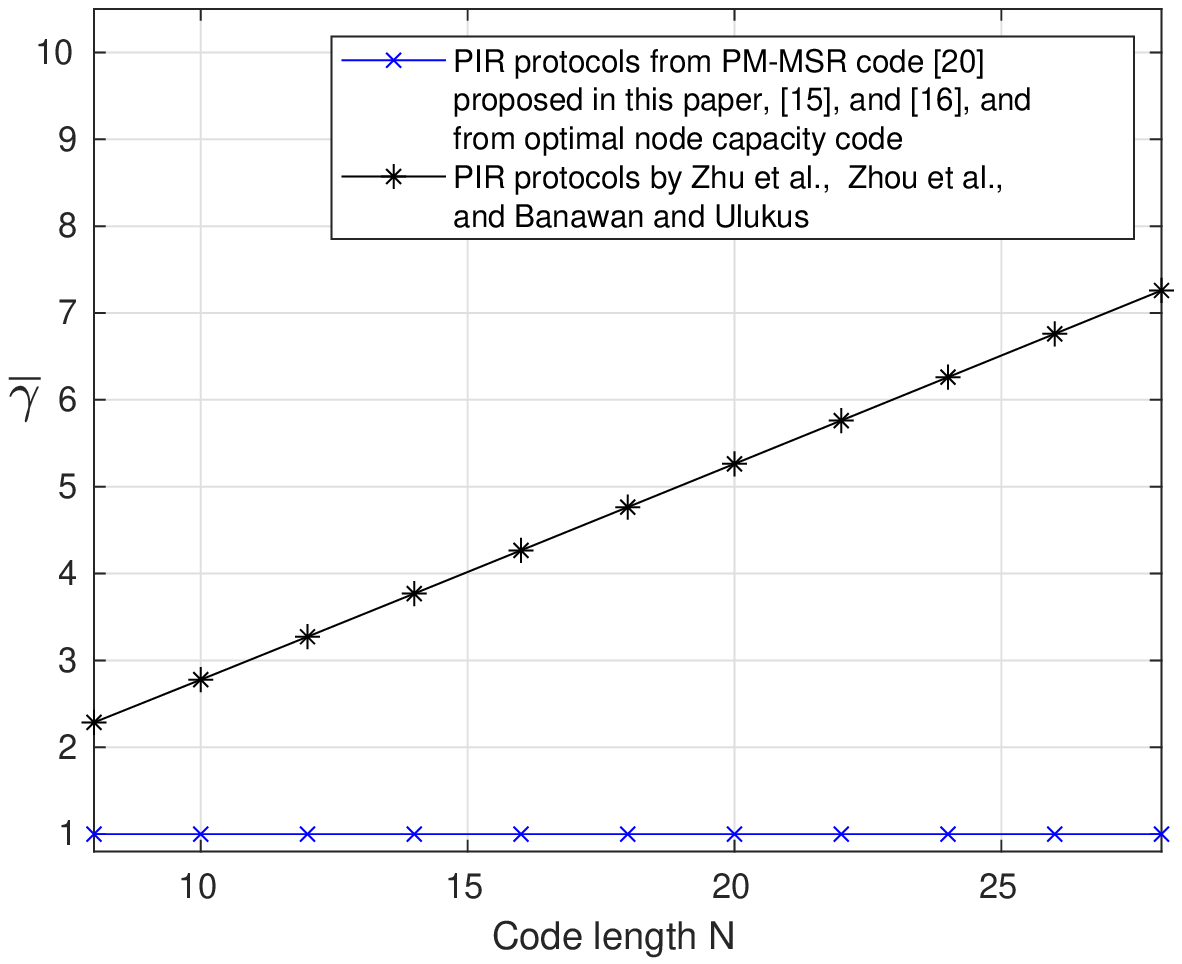}
\caption{Comparisons of the ratio $\overline{\gamma}$ of repair bandwidth  to the  optimal value  among new PIR protocols and some known ones under fixed code rate $\frac{K}{N}=\frac{1}{2}$}\label{picture ga 3}
\end{minipage}
\hfill
\begin{minipage}[t]{0.48\textwidth}
\centering
\includegraphics[scale=0.55]{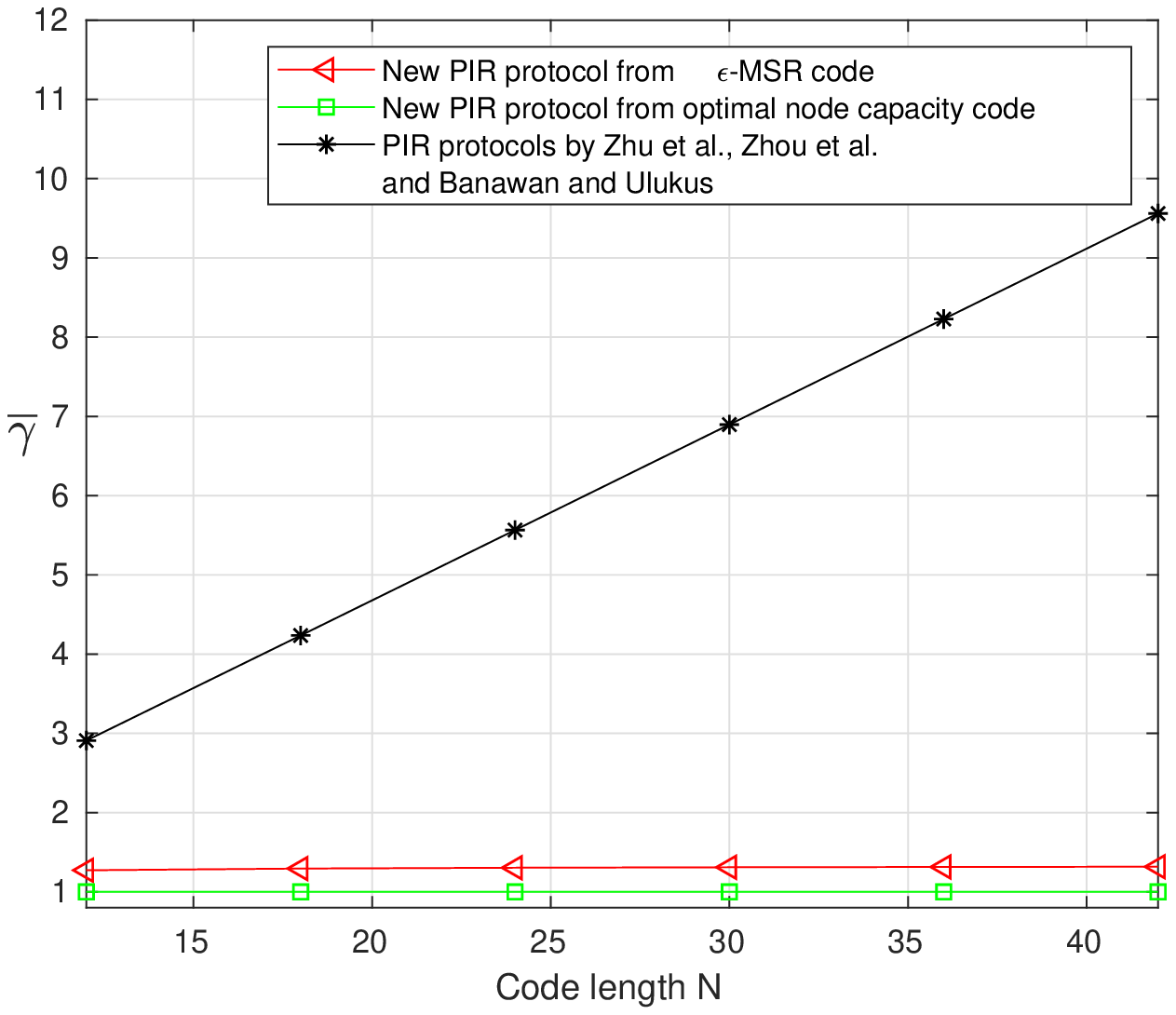}
\caption{Comparisons of the ratio $\overline{\gamma}$ of repair bandwidth  to the  optimal value among new PIR protocols and some known ones under fixed code rate $\frac{K}{N}=\frac{2}{3}$}\label{picture ga 4}
\end{minipage}
\end{figure}

These figures
convince  the previous arguments. From Figures \ref{picture 1}-\ref{picture 4}, we know that the PIR protocols proposed by Zhu \textit{et al.} and Zhou \textit{et al.} are the best among the PIR protocols considered in this paper in terms of the sub-packetization, however, the repair bandwidth is far from optimal according to  Figures \ref{picture ga 1}-\ref{picture ga 4}.
When taking into account both the sub-packetization and the repair bandwidth,   compared to other existing PIR protocols from MDS array codes, we have the following observations:
\begin{itemize}
    \item [i)] According to Figures \ref{picture 1}, \ref{picture 2}, \ref{picture ga 1}, and \ref{picture ga 2}, the new PIR protocol from $\epsilon$-MSR code is very competitive  when the code rate $\frac{K}{N}$ asymptotically close to 1, as it has smaller sub-packetization than all the other PIR protocols except the ones proposed by Zhu \textit{et al.} and \textit{Zhou et al.}, but it has much smaller repair bandwidth than those proposed by Zhu \textit{et al.} and \textit{Zhou et al.}
    \item [ii)] According to Figures \ref{picture 3}  and \ref{picture ga 3}, the new PIR protocol from PM-MSR code is very competitive when the code rate $\frac{K}{N}$ is $\frac{1}{2}$, with the similar arguments stated in i).
       \item [iii)] According to Figures \ref{picture 4}  and \ref{picture ga 4}, the new PIR protocol from optimal node capacity code is very competitive when the code rate $\frac{K}{N}$ is $\frac{2}{3}$, with the similar arguments stated in i).
\end{itemize}

\begin{Remark}
From Figures \ref{picture 1}-\ref{picture ga 4}, it seems that the PIR protocol proposed by Banawan and Ulukus \cite{Banawan_MDS_capacity} is the ``worst" in terms of the sub-packetization and repair bandwidth. However, we would like to clarify that   \cite{Banawan_MDS_capacity} is a pioneer work in PIR from MDS-coded servers, the main aims of which are to determine the capacity of PIR from MDS-coded servers and to find a capacity-achieving protocol, which is the first capacity-achieving result in PIR from MDS-coded servers. The sub-packetization and repair bandwidth were not the aims in \cite{Banawan_MDS_capacity}.
\end{Remark}

\section{Concluding Remarks}
In this paper, we proposed a PIR protocol from MDS array codes with (near-)optimal repair bandwidth, which subsumes PIR from MSR-coded servers as a special case.  The   retrieval rate of the new PIR protocol achieves the capacity of PIR from MDS-/MSR-coded servers. Particularly, four new PIR protocols were obtained by employing some known MDS array codes, with one of them implementable over $\mathbf{F}_2$. In addition to the capacity-achieving rate,  these new PIR protocols have several advantages when compared with existing PIR protocols, such as (near-)optimal repair bandwidth and/or small sub-packetization.
Extending the result to  other cases such as PIR from MDS array codes with colluding servers will be part of our future work.

\section*{Acknowledgment}
The authors would like to thank the Associate Editor Prof. Li Chen and the three anonymous reviewers for their valuable suggestions and comments, which have greatly improved the presentation and quality of this paper. They are also grateful to Prof. Chao Tian and Dr. Jinbao Zhu for helpful discussions.

\ifCLASSOPTIONcaptionsoff
  \newpage
\fi

%
\begin{IEEEbiography}[{\includegraphics[width=1in,height=1.25in,clip,keepaspectratio]{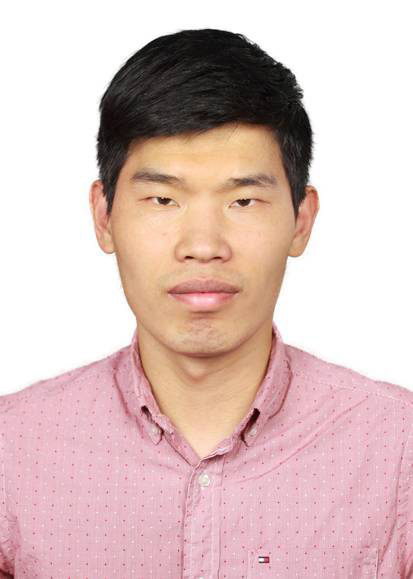}}]
{Jie Li}(S'16-M'17)
received the B.S. and M.S. degrees in mathematics from Hubei University, Wuhan, China, in 2009 and 2012, respectively, and received the Ph.D. degree from the department of communication engineering, Southwest Jiaotong University, Chengdu, China, in 2017. From  2015 to   2016, he was a visiting Ph.D. student in the Department of Electrical Engineering and Computer Science, The University of Tennessee at Knoxville, TN, USA.  From   2017 to   2019, he was  a postdoctoral researcher at the Department of Mathematics, Hubei University, Wuhan, China. Since   2019, he has been a postdoctoral researcher at  the Department of Mathematics and Systems Analysis, Aalto University, Finland. His research interests include private information retrieval, coding for distributed storage, and sequence design.

Dr. Li received the IEEE Jack Keil Wolf ISIT Student Paper Award in 2017.
\end{IEEEbiography}

\begin{IEEEbiography}[{\includegraphics[width=1in,height=1.25in,clip,keepaspectratio]{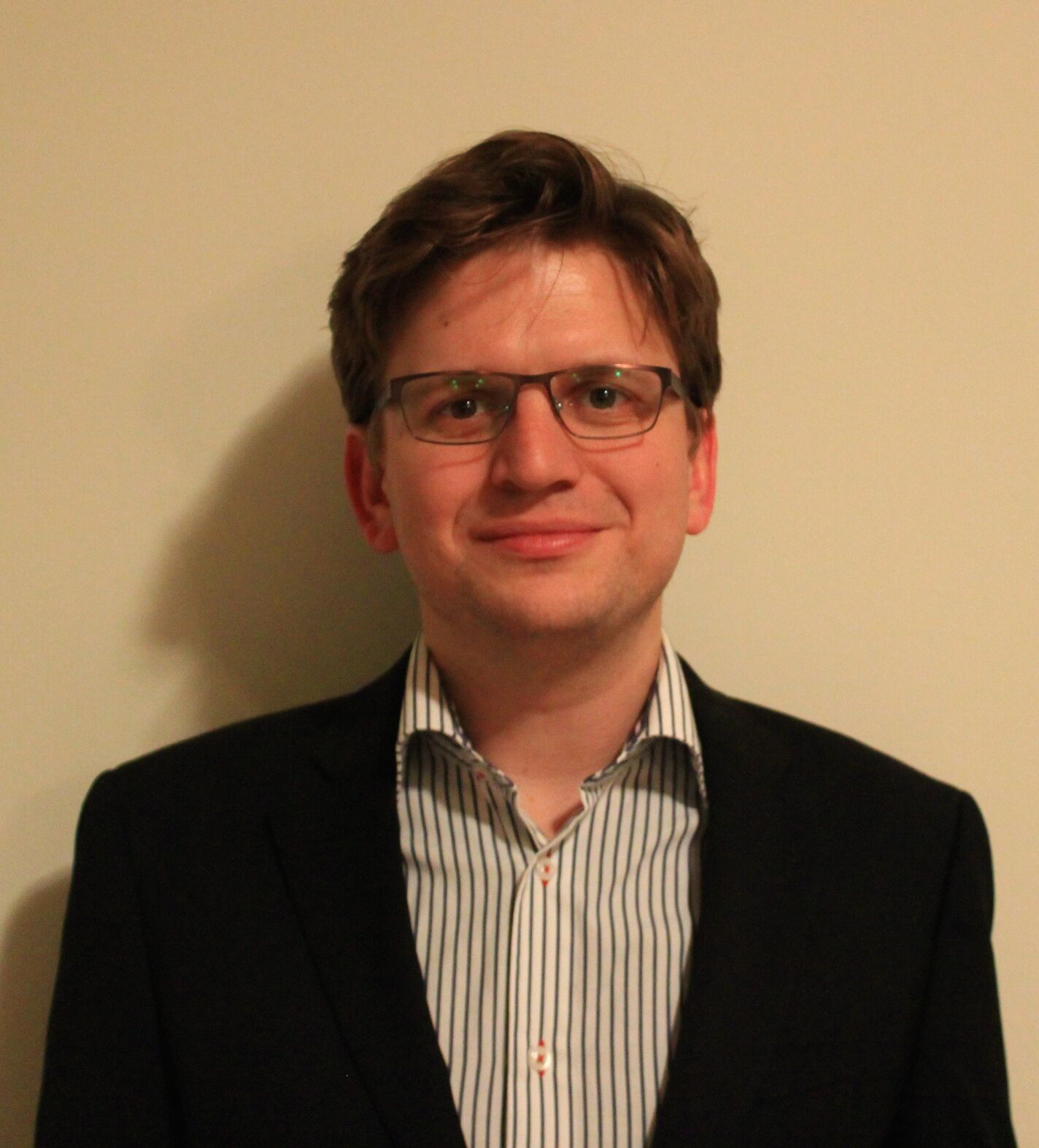}}]
{David Karpuk}(B.A. '06, Ph.D. '12)     received the bachelor's degree in mathematics from Boston College, and the Ph.D. degree in mathematics from University of Maryland, College Park.  He was Postdoctoral Researcher in the Department of Mathematics and Systems Analysis at Aalto University, Helsinki, Finland, from 2012-2017, and an Assistant Professor in the Department of Mathematics at Universidad de los Andes, Bogot\'a, Colombia from 2017-2019.  He was the recipient of Postdoctoral Researcher grants from the Academy of Finland and the Magnus Ehrnrooth Foundation.  Currently he is a Senior Data Scientist at F-Secure Corporation, Helsinki, Finland, where his research interests include applications of machine learning to cyber security.
\end{IEEEbiography}

\begin{IEEEbiography}[{\includegraphics[width=1in,height=1.25in,clip,keepaspectratio]{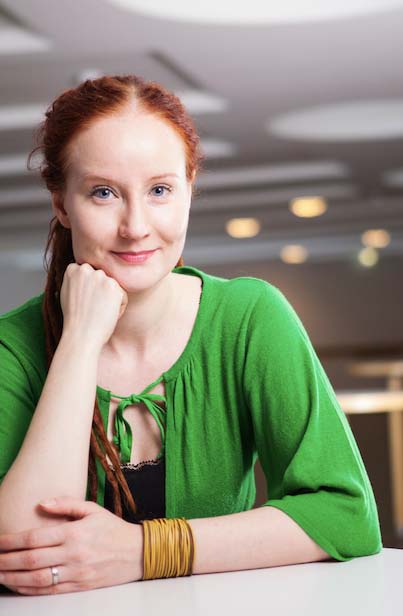}}]
{Camilla Hollanti}(M'09)     received the M.Sc. and Ph.D. degrees from the University of Turku, Finland, in 2003 and 2009, respectively, both in pure mathematics. Her research interests lie within applications of algebraic number theory to wireless communications and physical layer security, as well as in combinatorial and coding theoretic methods related to distributed storage systems and private information retrieval.

For 2004-2011 Hollanti was with the University of Turku. She joined the University of Tampere as  Lecturer for the academic year 2009-2010. Since 2011, she has been with the Department of Mathematics and Systems Analysis at Aalto University, Finland, where she currently works as Full Professor and Vice Head, and leads a research group in Algebra, Number Theory, and Applications. During 2017-2020, Hollanti is also affiliated with the Institute of Advanced Studies at the Technical University of Munich, where she holds a three-year Hans Fischer Fellowship, funded by the German Excellence Initiative and the EU 7th Framework Programme.

Hollanti is an editor of the AIMS Journal on Advances in Mathematics of Communications. She is a recipient of several grants, including five Academy of Finland grants. In 2014, she received the World Cultural Council Special Recognition Award for young researchers. In 2017, the Finnish Academy of Science and Letters awarded her the V\"ais\"al\"a Prize in Mathematics. For 2020-2022, Hollanti will serve as a member of the Board of Governors of the IEEE Information Theory Society, and she is one of the General Chairs of IEEE ISIT 2022.
\end{IEEEbiography}




\end{document}